\documentclass{article}



\usepackage[preprint,nonatbib]{neurips_2022}



\usepackage[utf8]{inputenc} 
\usepackage[T1]{fontenc}    
\usepackage{hyperref}       
\usepackage{url}            
\usepackage{booktabs}       
\usepackage{amsfonts}       
\usepackage{nicefrac}       
\usepackage{microtype}      
\usepackage{xcolor}         

\usepackage{pifont} 
\usepackage{amsmath} 
\usepackage{rotating} 
\usepackage{adjustbox} 
\usepackage{amsthm} 
\usepackage{textcomp} 
\usepackage{amsfonts} 
\usepackage{bm} 
\usepackage{amsthm} 
\usepackage{mathrsfs} 
\usepackage{thmtools, thm-restate} 
\usepackage{graphicx} 
\usepackage{subfigure} 
\usepackage{chngcntr} 
\usepackage{wrapfig} 
\usepackage{hyperref} 
\usepackage{multirow} 
\usepackage{makecell} 
\usepackage{caption} 
\usepackage[T1]{fontenc} 
\usepackage{mfirstuc}  
\usepackage[capitalize,noabbrev]{cleveref}
\usepackage[algo2e,linesnumbered,noend,ruled]{algorithm2e}  
\SetAlFnt{\footnotesize}
\SetAlCapFnt{\footnotesize}
\SetAlCapNameFnt{\footnotesize}
\usepackage{bbm}  
\usepackage[most]{tcolorbox}
\tcbset{
    frame code={}
    center title,
    left skip=0pt,
    left=0pt,
    leftrule=-5pt,
    rightrule=-5pt,
    right=0pt,
    top=0pt,
    bottom=0pt,
    colback=black!4,
    width=1\textwidth,
    boxsep=5pt,
    arc=0pt,outer arc=0pt,
    }



\theoremstyle{plain}

\theoremstyle{definition}

\newtheorem{fact}{Fact}
\newtheorem{remark}{Remark}

\newcommand{\thicktilde}[1]{\mathbf{\tilde{\text{$#1$}}}}

\newcommand{\AsyncActionFirstCap}{Off-beat~}

\newcommand{\AsyncAction}{off-beat~}
\newcommand{\AsyncActionNotilde}{off-beat}
\newcommand{\AsyncActionUpperCase}{Off-Beat~}
\newcommand{\ie}{\textit{i.e.}}
\newcommand{\eg}{\textit{e.g.}}
\newcommand{\MAASYNCENVS}{OBMAS}
\newcommand{\OurMethod}{LeGEM}
\newcommand{\OurMethodTilde}{LeGEM~}
\newcommand {\thmbox}[2]{\begin{tcolorbox}\begin{#1} #2 \end{#1}\end{tcolorbox}}

\newcommand*{\rom}[1]{\expandafter\@slowromancap\romannumeral #1@}

\title{\AsyncActionUpperCase Multi-Agent Reinforcement Learning}

%

\author{%
  Wei Qiu\thanks{Correspondence to \texttt{qiuw0008@e.ntu.edu.sg}}~~\textsuperscript{\textnormal{a}}, 
  Weixun Wang\textsuperscript{\textnormal{b}}, 
  Rundong Wang\textsuperscript{\textnormal{a}}, 
  Bo An\textsuperscript{\textnormal{a}}, 
  Yujing Hu\textsuperscript{\textnormal{c}}, 
  Svetlana Obraztsova\textsuperscript{\textnormal{a}}, \\
  \textbf{Zinovi Rabinovich}\textsuperscript{\textnormal{a}}\textbf{,}
  \textbf{Jianye Hao}\textsuperscript{\textnormal{b}}\textbf{,}
  \textbf{Yingfeng Chen}\textsuperscript{\textnormal{c}}\textbf{,}
  \textbf{Changjie Fan}\textsuperscript{\textnormal{c}} \\
  \textsuperscript{a}Nanyang Technological University, Singapore\\
  \textsuperscript{b}Tianjin University, Tianjin, China\\
  \textsuperscript{c}Fuxi AI Lab, Netease, Inc., Hangzhou, China
}

\begin{document}

\maketitle

\begin{abstract}
We investigate model-free multi-agent reinforcement learning (MARL) in environments where \textit{\AsyncAction}actions are prevalent, \ie, all actions have pre-set execution durations. During execution durations, the environment changes are influenced by, but not synchronised with, action execution. Such a setting is ubiquitous in many real-world problems. However, most MARL methods assume actions are executed immediately after inference, which is often unrealistic and can lead to catastrophic failure for multi-agent coordination with \AsyncAction actions. In order to fill this gap, we develop an algorithmic framework for MARL with \AsyncAction actions. We then propose a novel episodic memory, \OurMethod, for model-free MARL algorithms. \OurMethod~builds agents' episodic memories by utilizing agents' individual experiences. It boosts multi-agent learning by addressing the challenging temporal credit assignment problem raised by the \AsyncAction actions via our novel reward redistribution scheme, alleviating the issue of non-Markovian reward. We evaluate \OurMethod~on various multi-agent scenarios with \AsyncAction actions, including Stag-Hunter Game, Quarry Game, Afforestation Game, and StarCraft II micromanagement tasks. Empirical results show that \OurMethod~significantly boosts multi-agent coordination and achieves leading performance and improved sample efficiency.
\end{abstract}

\section{Introduction}\label{sec:intro}



In Multi-Agent Reinforcement Learning (MARL), multiple agents act interactively and complete tasks in a sequential decision-making manner with Reinforcement Learning (RL). It has made remarkable advances in many domains, including autonomous systems~\cite{cao2012overview,huttenrauch2017guided,zhou2020smarts} and real-time strategy (RTS) video games~\cite{vinyals2019grandmaster}.
By the virtue of the \textit{centralised training with decentralised execution} (CTDE)~\cite{oliehoek2008optimal} paradigm, which aims to tackle the scalability and partial observability challenges in  MARL, many CTDE-based MARL methods are proposed~\cite{foerster2017counterfactual,sunehag2017value,rashid2018qmix,wang2019action,son2019qtran,wang2021dop,kuba2021trust,pan2021regularized}. With these methods, an agent executes actions only via feeding its individual observations independently and optimizes its own policy with access to global trajectories centrally.


Despite the recent successes of MARL, learning
effective multi-agent coordination policies for complex multi-agent systems remains challenging. One key challenge is the \textit{\AsyncActionNotilde} actions, \ie, all actions have pre-set execution durations\footnote{In the RL literature~\cite{ramstedt2019real,bouteiller2020reinforcement}, action execution durations are called \textit{delays of actions}. In this paper, we use the term \textit{execution durations}, which is self-consistent with \AsyncAction actions defined in Sec. \ref{sec:async_decpomdp}.} and during the execution durations, the environment changes are influenced by, but not synchronised with, action execution (an illustrative scenario is shown in Fig. \ref{fig:async_example}). However, Dec-POMDP~\cite{oliehoek2016concise}, which underpins many CTDE-based MARL methods, hinges on the assumption that actions are executed momentarily after inference, leading to catastrophic failure for \textit{centralized training} on various off-beat multi-agent scenarios (\MAASYNCENVS). To fill this gap, 
\begin{figure}[ht]
    \vspace{-0.4cm}
    \centering
    \includegraphics[scale=0.415]{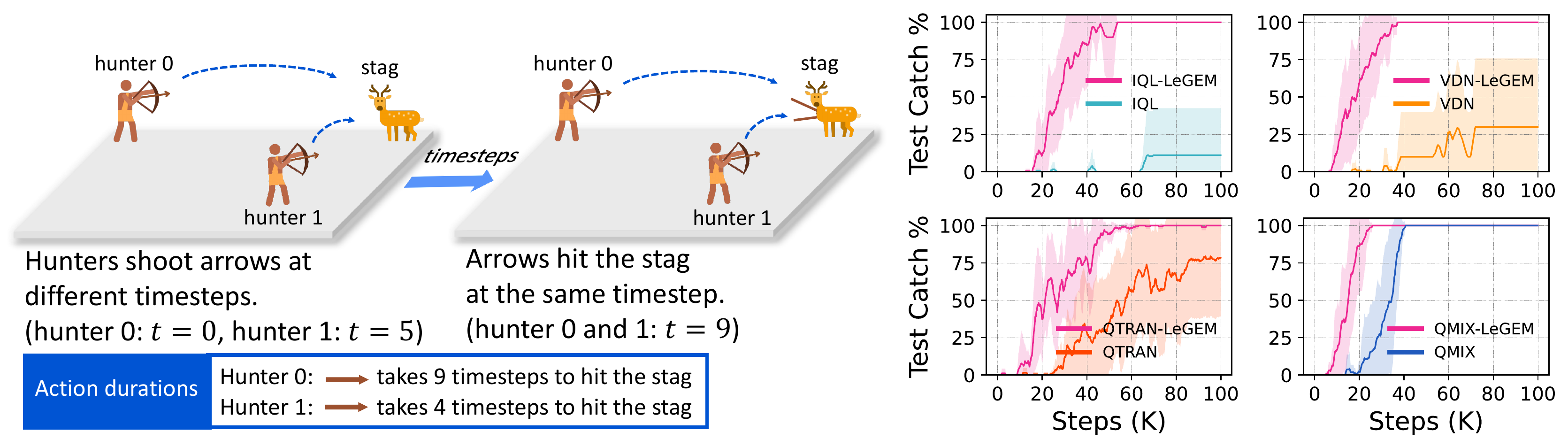}
    \vspace{-0.6cm}
    \caption{\footnotesize{\textbf{An illustrative scenario:} two-agent stag-hunter game, where two agents (hunters)  have only partial observations, different durations of the shoot action, and cannot communicate. The goal is to catch the stag and they are rewarded when their shot hits -- as in, completion of the action is synchronised, the stag at the same time. Both agents can see the stag. As the shoot action durations of the two agents are different, to catch the stag, the two agents should shoot the arrow at different timesteps given the distances. Though the scenario is easy for human beings, it is hard for MARL agents due to the action duration. \textbf{Experiment results:} 
    in this scenario, the optimal policy for agent $0$ is to shoot the arrow at timestep $0$ while the optimal policy for agent $1$ is to shoot the arrow at timestep $5$. Such asynchronous property of \MAASYNCENVS~ motivates agents to learn tacit policies. The curves show that VDN and IQL fail to learn coordination policies even in this simple scenario. 
    With \OurMethod, MARL methods gain enhanced performances as well as improved sample efficiency.} 
    }
    \vspace{-0.7cm}
    \label{fig:async_example}
\end{figure}
we study MARL in settings where \AsyncAction actions are prevalent. Such setting is very common in many real-world problems. For example, in the traffic light control problem, traffic lights in the conjunctions of the road network have pre-set execution time which is set asynchronously.


The problem of \AsyncAction actions in MARL has yet to be investigated and tackled.
Training MARL policies in \MAASYNCENVS~is challenging: (i) Each agent's actions can have a variety of execution durations, which augments the order of complexity of \MAASYNCENVS~during decentralized execution, resulting in failure of the coordination; (ii) The action durations are unknown to agents during individual executions, and communication is constrained and not always feasible, making it non-trivial to model the environment;  (iii) During training, both the temporal credit assignment with TD-learning~\cite{sutton2018reinforcement} and the \textit{inter-agent} credit assignment with value decomposition methods~\cite{rashid2018qmix} cannot perform well due to the displaced rewards in multi-agent replay.
With \AsyncAction actions, the nonstationarity issue, which mainly stems from rewards' time dependency on the agents' past actions, is exacerbated.

While actions durations are ubiquitous, existing works only focus on single-agent settings, \ie, delay, in RL. Many approaches~\cite{walsh2009learning,ramstedt2019real,xiao2019thinking} augment the state space with the queuing actions to be executed into the environment. However, such state-augmentation trick leads to exponentially increasing training samples with the growing action duration, making training intractable~\cite{derman2020acting}. Chen et al.~\cite{chen2020delay} extend the delayed MDP~\cite{ramstedt2019real} and propose Delayed Markov Game for MARL. However, on one hand, such state-augmentation treatment is confined to short delays, \eg,  one timestep delay; on the other hand, the delayed timestep of the actions is privileged information, which is not available in many scenarios. Recent works on macro-actions~\cite{xiao2020macro,xiao2020multi} introduce asynchronous actions by designing macro-actions with prior environment knowledge. Macro-actions are different from options in hierarchical RL (HRL)~\cite{sutton1999between,bacon2017option} in that the later is not manually designed but learned. The key difference between macro-actions and off-beat actions is that macro-actions are high-level actions while off-beat actions are primitive actions.  Unfortunately, the \textit{inter-agent} credit assignment is still a challenge of HRL in \MAASYNCENVS~and the asynchronous \footnote{We clarify the term \textit{asynchronous}: actions that simultaneously committed into the environment by all agents in MARL will not complete their respective action durations at the same time in future timesteps.} nature of \AsyncAction actions undermines the temporal credit assignment of \textit{centralized training}, causing poor sample efficiency and unsatisfactory performance (more discussions can be found in the related works sections in Sec.~\ref{sec:related_works}).



We aim to address the aforementioned issues. We first propose \AsyncAction Dec-POMDP. We then instantiate a new class of episodic memory, \OurMethod, for model-free MARL algorithms. \OurMethod~boosts multi-agent learning by addressing the challenging temporal credit assignment problem raised by the \AsyncAction actions via our novel levelled graph-based temporal recency reward redistribution scheme. Specifically, each agent maintains \OurMethod~and during centralized training, each agent searches the pivot timestep given observations from its graph. The pivot timestep is the timestep wherein the \AsyncAction reward relates to the given node. The pivot timesteps of each agent are ranked, in which the final pivot timestep will be chosen by recency and later used for reward redistribution and target estimation in TD-learning. We evaluate our method on Stag-Hunter Game, Quarry Game, Afforestation Game, and StarCraft II micromanagement tasks. Empirical results show that our method significantly boosts multi-agent coordination and achieves leading performance as well as improved sample efficiency.



\section{Preliminaries}\label{sec:preliminaries}
\textbf{Dec-POMDP.} A cooperative MARL problem can be modeled as a \textit{decentralised partially observable Markov decision process} (Dec-POMDP) which can be formulated as a tuple $\langle\mathcal{S},\mathcal{U}, \mathcal{P},R, O,\mathcal{N}, \gamma\rangle$, where $\bm{s} \in \mathcal{S}$ denotes the state of the environment. Each agent $i \in \mathcal{N} := \{1,...,N\} $ chooses an action $u_i \in \mathcal{U}$ at each timestep, forming a joint action vector, $\bm{u} := [u_i]_{i=1}^N \in \mathcal{U}^N$. The Markovian transition function can be defined as $\mathcal{P}(\bm{s}'|\bm{s},\bm{u}):\mathcal{S} \times \mathcal{U}^N \times \mathcal{S}\mapsto [0,1]$, transiting one state of current timestep to the state of next timestep conditioned on current state and joint action. Every agent shares the reward and the reward function is $R(\bm{s},\bm{u}): \mathcal{S} \times \mathcal{U}^N \mapsto \mathcal{R}$. $\gamma \in [0,1)$ is the discount factor. Due to \textit{partial observability}, each agent has individual partial observation $o \in \mathcal{O}$, according to the observation function $O(\bm{s},i): \mathcal{S} \times \mathcal{N} \mapsto \mathcal{O}.$ The goal of each agent is to optimize its own policy $\pi_i(u_i|\tau_i) : \mathcal{T} \times \mathcal{U} \mapsto [0,1]$ given its action-observation-reward history $\tau_i \in \mathcal{T} := (\mathcal{O} \times \mathcal{U})$.

\textbf{Multi-Agent Reinforcement Learning.} MARL aims to learn optimal policies for all the agents in the team. With TD-learning and a global Q value proxy $Q^{\operatorname{tot}}$ for the optimal $Q^{*}$, $\{Q_i\}^{N}_{i=1}$ are optimized via minimizing the  loss~\cite{watkins1992q,mnih2015human}: $\theta^{*} = \arg\min_{\theta^{*}} \mathcal{L} (\theta) :=\mathbb{E}_{D^{\prime} \sim \mathcal{D}}[(y_{t}^{\text {tot}}-Q_{\theta}^{\text {tot }}(\boldsymbol{s}_{t}, \boldsymbol{u}_{t}))^{2}]$, where $y_{t}^{\mathrm{tot}}=r_{t}+\gamma \max _{\boldsymbol{u}^{\prime}} Q_{\bar{\theta}}^{\mathrm{tot}}(\boldsymbol{s}_{t+1}, \boldsymbol{u}^{\prime})$
and $\theta$ is the parameters of the agents. $\bar{\theta}$ is the parameter of the target $Q^{\operatorname{tot}}$ and is periodically copied from $\theta$. $D^{\prime}$ is a sample from the replay buffer $\mathcal{D}$.

\section{\AsyncActionUpperCase Dec-POMDP}\label{sec:async_decpomdp}

We introduce our formulation for \MAASYNCENVS. We first define the \AsyncAction actions\footnote{Asynchronicity is prevalent in real-world multi-agent scenarios, including asynchronicity in observations, actions and communication, etc. In this paper, we focus on the asynchronicity of actions in multi-agent scenarios. For brevity, we name the asynchronicity of actions in MARL as  \textit{off-beat}.} for multi-agent scenarios; then we propose the \AsyncActionUpperCase Dec-POMDP. All the proofs can be found in Appx.~\ref{appendix:proofs}. 

\thmbox{definition}{[\AsyncActionUpperCase Actions] \AsyncActionFirstCap action $\thicktilde{u} \in \mathcal{U}$ characterizes \MAASYNCENVS~where the action $\thicktilde{u}_i$ taken by agent $i$ has execution duration $m_{\thicktilde{u}_i} \sim A(m | \thicktilde{u}_i, i)$ , $A \in \mathcal{A}$, $m \in \{0, 1, 2, \cdots, M\}$ and $M \leq  T$, where  $T$ is the maximum duration and $A$ 
is the action duration distribution. It is a distribution and takes $\thicktilde{u}_i$ and the index of the agent as parameters. $A$ can be either stochastic or deterministic. The joint \AsyncAction action is $\bm{\thicktilde{u}} = [\thicktilde{u}_i]^{N}_{i=1}$. The execution duration is decided at the time the action was committed to the environment. Thus, the execution duration of an action $\bm{\thicktilde{u}}_{t}$ initiated at timestep $t$ is $\bm{m}_t=\{m^{t}_{\thicktilde{u}^{t}_i}\}^{N}_{i=1}$.
}
Note that for each agent, $m^{t}_{\thicktilde{u}^{t}_i}$\footnote{We will omit $t$ in the rest of the paper for brevity.} can be different. At timestep $t$, there are at least $1$ action \footnote{We note that agents have a special NO-OP action available.} and at most $N$ actions being initiated (committed to the environment for execution), leading to asynchronicity of the joint actions.
Next, we propose the \AsyncActionUpperCase Dec-POMDP for \MAASYNCENVS~and discuss its properties. 
\thmbox{definition}{[\AsyncActionUpperCase Dec-POMDP]\label{def:async_decpomdp}\AsyncActionUpperCase Dec-POMDP extends Dec-POMDP, such that \\
(1) state space is $\mathcal{S}$; (2) joint action space is $\mathcal{U}^{N}$; (3) action duration space is $\mathcal{A}^N$; \\
(4) transition function is $\mathcal{P}(\bm{s}'|\bm{s},\thicktilde{\boldsymbol{u}}, \bm{m}):\mathcal{S} \times \mathcal{U}^N \times \mathcal{S} \times \mathcal{A}^N \mapsto [0,1]$, and $\bm{m}$ is the action durations of the joint action; \\
(5) the reward function is $R(\bm{s},\thicktilde{\boldsymbol{u}}, \bm{m}): \mathcal{S} \times \mathcal{U}^N \times \mathcal{A}^N \mapsto \mathcal{R}$; \\
(6) we call a reward $r$ as \AsyncAction reward when any its $m_{\thicktilde{u}_i} \geq 1$, $m_{\thicktilde{u}_i} \in \boldsymbol{m}$, and $r \neq 0$.}

In \MAASYNCENVS, at each timestep $t$, the environment receives actions that agent initiates for execution in the environment. The initiated actions $\thicktilde{\boldsymbol{u}}_t$ are instantaneous actions inferred by agents' policies given individuals' observations. The joint reward is the consequence of the committed joint actions of current timestep and previous timesteps, depending on the actions' duration.
The asynchronicity is an inherent feature of the environment, which is different from asynchronicity incurred by communication delays in many video games (asynchronous gameplay\footnote{\url{https://www.whatgamesare.com/2011/08/synchronous-or-asynchronous-definitions.html}}). We discuss some properties of \AsyncActionUpperCase Dec-POMDP below. 

\begin{remark}
\textit{
When the durations for all actions are identical, \AsyncAction Dec-POMDP reduces to Delayed Dec-POMDP and there is no \AsyncAction actions in it.}
\end{remark}

\begin{remark}
\textit{
There exists $\thicktilde{\boldsymbol{u}}$ that is synchronous since duration of agents' actions can be $m=0$. When $m$ of all actions is $zero$, \AsyncAction Dec-POMDP reduces to Dec-POMDP.} 
\end{remark}
In Delayed Dec-POMDP, actions have the same delayed timesteps, which is different from \AsyncAction actions where actions have different action durations or delays. In order to investigate the problem, we consider the deterministic setting of the transition function and the reward function.


\begin{remark}[Non-episodic Reward]
\textit{
In our formulation, the reward is not episodic reward~\cite{han2021off}.}\label{rmrk:non_episodic_rew}
\end{remark}


\begin{remark}[Non-Markovian Reward]\label{theory:non_markov}\textit{With \AsyncAction actions, the Markovian property of the reward function $R(\bm{s},\thicktilde{\boldsymbol{u}}, \bm{m})$ does not hold.}
\end{remark}

With \AsyncAction actions, the shared rewards can be readily displaced, causing non-Markovian rewards. 
Solving \AsyncActionUpperCase Dec-POMDP is challenging as discussed in Sec. \ref{sec:intro}. 
We propose our methods to tackle aforementioned challenges. 

\section{The Journey is the Reward: A Collective Mental Time Travel Method} \label{sec:our_method}

We propose two methodological elements for \AsyncActionUpperCase MARL.
The first, \OurMethod, presented in this section, is a form of episodic memory that facilitates discovery of a pivotal timestep for \AsyncAction rewards; and the second, presented in Sec.~\ref{sec:redistribution}, is redistribution of the \AsyncAction reward to the pivot timestep when the relevant \AsyncAction actions were initialised.



\subsection{\OurMethod: A \underline{Le}velled \underline{G}raph \underline{E}pisodic \underline{M}emory for \AsyncActionUpperCase MARL}\label{sec:triem}

Human learning relies on retrospecting our detailed memory of the past~\cite{tulving1985memory,suddendorf2009mental}. For example, while exploring a new scenic area, we do not just remember a multitude of specific spots there, but can recall the paths that connect them with junctions and turns. 
However, there is no MARL method that can explicitly recall the past and identify key states that lead to future rewards. Such ``\textit{mental time travel}''~\cite{lampinen2021mentaltime} ability is vital for tackling the challenges in \MAASYNCENVS. Inspired by the recent progress in RL with episodic memory~\cite{hung2019optimizing,botvinick2019reinforcement,HuHaoGEM2021} that is based on the memory prosthesis proposed by neuroscientists~\cite{tulving1985memory,suddendorf2009mental}, 
we propose our method of episodic memory representation for MARL. Unlike previous episodic memory methods that train a parameterized memory by either augmenting the policy inputs for execution~\cite{hung2019optimizing} or regularizing the TD learning~\cite{HuHaoGEM2021} for RL, our method utilizes the levelled graph data structure~\cite{biggs1986graph}, a well established structure for data storage and retrieval, to represent an agent's individual episodic memory. 

\begin{wrapfigure}{r}{0.27\textwidth}
\vspace{-0.8cm}
    \begin{center}
        \includegraphics[width=.27\textwidth]{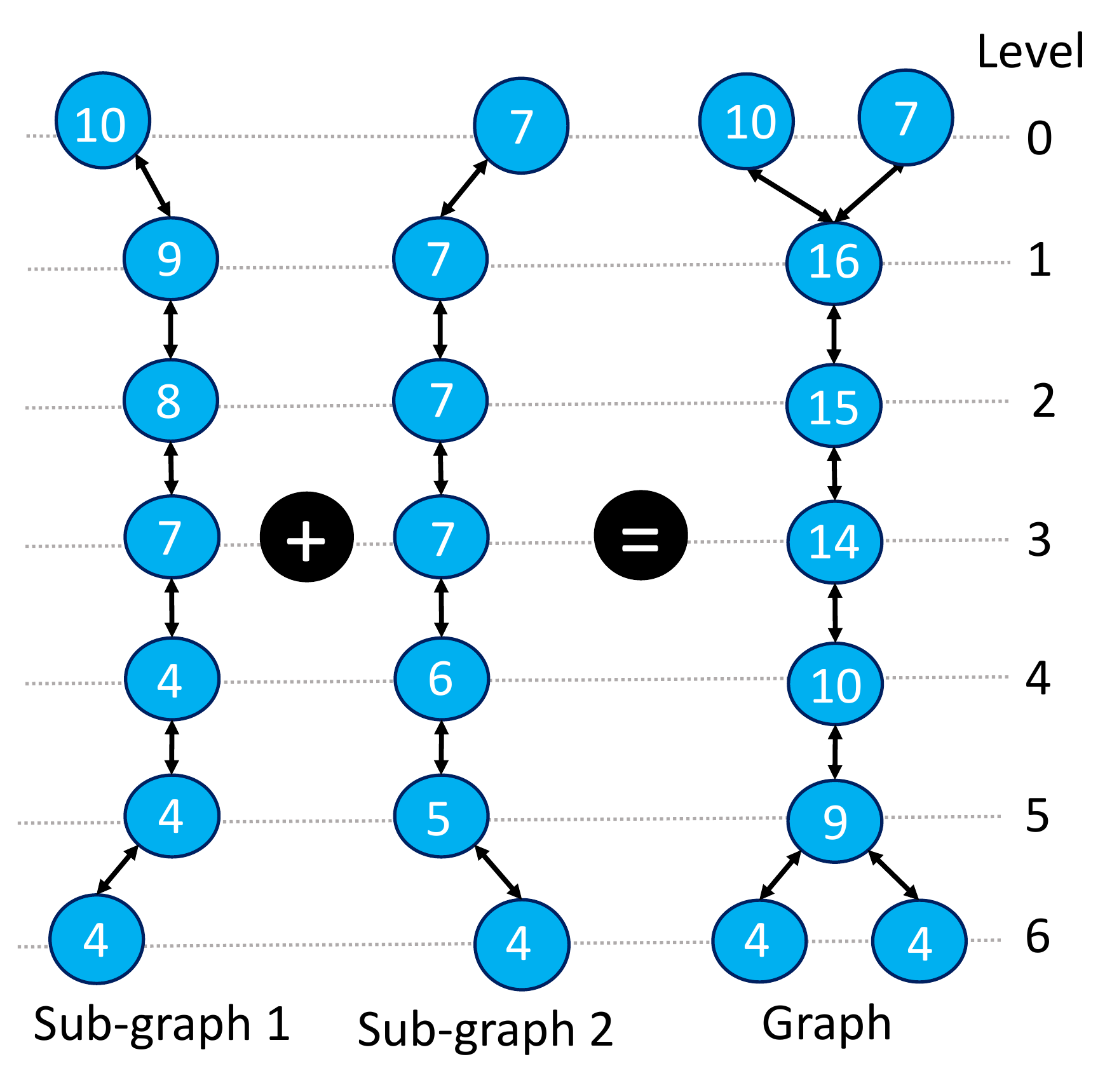}
        \vspace{-0.4cm}
        \caption{\footnotesize{The maximum level of the graph is 7. Circles indicate the nodes and numbers indicate the visit count.}}\label{fig:subgraphs}
    \vspace{-0.5cm}
    \end{center}
\end{wrapfigure}
We propose our novel episodic memory,  Levelled Graph Episodic Memory method (LeGEM), via the levelled graph data structure. \OurMethod~memorizes each agent's past trajectories which are partial observations and the unilateral action of the agent. During training, each agent $i$ collects its individual trajectories $\tau_{i}$. We then define $\tau_{i}$ of agent $i$ as $ \tau_{i}=[(o^{0}_{i},\thicktilde{u}^{0}_{i}, r^{0}), \cdots, (o^{T-1}_{i},\thicktilde{u}^{T-1}_{i}, r^{T-1})]$,
where $T$ is the length of the trajectory and the triplet $(o^{t}_{i},\thicktilde{u}^{t}_{i}, r^{t})$ represents the observation, action and reward of timestep $t$. Note that $r^{t}$ is globally shared between agents. We define agent $i$'s \OurMethod~as a directed graph $\phi^{t}_{i} \in \Phi_{i}$ where $\Phi_i$ is the set of graphs of agent $i$ and $\phi^{t}_{i}$ is the $t$-th graph of $\Phi_{i}$, $t \in \{0, \cdots, T-1\}$. Each $\phi^{t}_{i}$ consists of a tuple of $\left( \Psi, \Xi \right)$ where $\Psi$ is the set of nodes and $\Xi$ represents the set of edges that connect nodes in the graph. 
To model an agent's behaviour explicitly and make the trajectories 
of agents easy to represent,
we create $T$
graphs for each agent and let
$\Phi_i=\{\phi^{t}_i\}^{T-1}_{t=0}$ where $T$ is the maximum level of all graphs and the maximum length of the episode as well. The maximum level of $\phi^{t}_i$ is $t+1$. The node contains key, visit count and pointers connecting the precursors (node at the previous level) and the successors (node at the next level). 
Besides the $\phi^{t}_{i}$, we define the sub-graph set of $\phi^{t}_{i}$ as $\Phi^{t,\Omega}_{i}=\{\phi^{t,\omega}_{i}\}^{\Omega-1}_{\omega=0}$ by using the discretized episode return and there are $\Omega$ sub-graphs. $\phi^{t,\omega}_{i}$ is the $\omega$-th sub-graph whose episode return is $\Upsilon \texttt{[} \omega \texttt{]}$ ($\omega \in \{0, \cdots, \Omega-1\}, \Upsilon=[0, \cdots, \boldsymbol{r}^{t,i}]$) where $\boldsymbol{r}^{t,i}$ is the discretized maximum episode return of $\phi^{t}_{i}$.
Unlike many parameterized episodic memory using state/observation as the key~\cite{hung2019optimizing,lampinen2021mentaltime},
we resort to \textit{afterstate}~\cite{powell2007approximate}. That is, we use agent $i$'s observation $o^{t}_{i}$ and action at timestep $t$, $\thicktilde{u}^{t}_{i}$, to define the key $(o^{t}_{i}, \thicktilde{u}^{t}_{i})$. We provide an example to showcase the relationship between sub-graph and the graph in Fig. \ref{fig:subgraphs}. For complex and continuous state scenarios, for example StarCraft II scenarios, we use SimHash~\cite{charikar2002similarity} to discretize the key $(o^{t}_{i}, \thicktilde{u}^{t}_{i})$.
This technique has been widely used in commercial search engines and RL~\cite{tang2017exploration}. Visit count indicates the total visits made by agent $i$ to the node. It initial value is $1$. Note that nodes are bidirectional since it is helpful for searching (see Sec. \ref{sec:find_the_pivot_time}).

Given a $\tau_i$ with the length of $T$, if the node is already in the graph at level $t$, we then increase the visit count by 1. 
Otherwise, we create a new node for level $t$ of the graph and update its pointers. Meanwhile, sub-graphs will be also created and updated. The process of updating LeGEM is in Alg. \ref{alg:update_graph}. We provide an example of Alg. \ref{alg:update_graph} in Fig. \ref{fig:construct_memory}, Appx.~\ref{appendix:ourmethod}. It is worth noting that $\tau_{i}$ is generated via the interaction of the agent with the environment, and there is no extra interaction needed to collect $\tau_{i}$. The generated trajectories are saved in the experience replay and later sampled for MARL training.

\begin{figure}[t!]
\begin{minipage}[h]{7.2cm}
\begin{algorithm2e}[H]
\textbf{Input:} $\tau$, $\Phi$, $\Upsilon$ and \texttt{Search} (scheme I or II);\\
\textbf{Initialize:} $\kappa$: an empty list to store pivot timesteps; \\
\tcp{Length of $\tau$ and $\tau_i$ are equal.}
$l \leftarrow \texttt{length(}\tau_i\texttt{)-1}$;\\
\For{$t\leftarrow0$ {\bfseries to} $\normalfont{\texttt{length(}}\tau\normalfont{\texttt{)-1}}$} {
    \If{$r^{t} \neq 0~(r^{t} \in \tau)$}{\tcp{\footnotesize{\AsyncActionFirstCap reward}}
        \For{$i\leftarrow1$ {\bfseries to} $N$} {
            Get $\tau_i$ from $\tau$; \\ \label{alg:search_extract}
            $\phi^{l}_i\leftarrow \Phi_i \texttt{[}l\texttt{]}$; \\
            $\psi \leftarrow \phi^{l}_i\texttt{.getNode(}o^{t}_{i}, \thicktilde{u}^{t}_{i}\texttt{)}$; \\
            Find all the paths $\Lambda^{t,l}_{i}$ from node $\psi$ to the node at level $0$;\\
            Get the discretized episode return $\boldsymbol{r}^{l,i}$; \\
            Get the index $\omega$ from $\Upsilon$ with $\boldsymbol{r}^{l,i}$; \\
            $e^{i}_{t} \leftarrow \texttt{Search(}\omega, \Lambda^{t}_{i}, \tau_i, \boldsymbol{r}^{l,i}, \Upsilon, \Phi_i \texttt{)}$; \label{alg:search_pivot_search_scheme}\\
        }
        Get $e_{t}$ (Eqn. \ref{eq:calculate_e_t}) and append $e_t$ to $\kappa$; \label{alg:search_get_e_t}
    }
}
\textbf{Return:} $\kappa$. \label{alg:search_return_kappa}
\caption{\texttt{SearchPivotTimesteps} ($\rho$)}\label{alg:search_triem_algo}
\end{algorithm2e}
\end{minipage}
\hspace{0.25cm}
\begin{minipage}[h]{6cm}
\vspace{0pt}
\begin{algorithm2e}[H]
\textbf{Input:} $\omega$, $\Lambda^{t,l}_{i}$, $\tau_i$, $\boldsymbol{r}^{l,i}$, $\Upsilon$ and $\Phi_i$;\\
\textbf{Initialize:} $\boldsymbol{e}^{i}_t$: a list whose values are all $t$ and its size is the number of paths in $\Lambda^{t,l}_{i}$; \\
$\phi^{l,\omega}_i \leftarrow \Phi^{l,\Omega}_i \texttt{[}\omega\texttt{]}$; \\
$\text{vc} \leftarrow \texttt{VisitCount(}\Lambda^{t,l}_{i}\texttt{)}$ (Alg. \ref{alg:create_mask}); \label{alg:search_I_visitcount} \\
\ForEach{\text{path} $\Lambda^{t,l}_{i}\normalfont{\texttt{[}j\normalfont{\texttt{]}}} \in \Lambda^{t,l}_{i}$}{%
    $e^{i,j, \downarrow}_{t} \leftarrow \texttt{UL(}\Lambda^{t,l}_{i}\texttt{[}j\texttt{]}, \text{vc}, \tau_i\texttt{)}$ (Alg. \ref{alg:ul}); \label{alg:search_I_UL} \\
    $e^{i,j, \uparrow}_{t} \leftarrow \texttt{LU(}\Lambda^{t,l}_{i}\texttt{[}j\texttt{]}, \text{vc}, \tau_i\texttt{)}$ (Alg. \ref{alg:lu}); \label{alg:search_I_LU} \\
    \uIf{$e^{i,j, \downarrow}_{t} \neq -1$}{
         $\boldsymbol{e}^{i}_t\texttt{[}j\texttt{]} \leftarrow e^{i,j, \downarrow}_{t}$;\\
    }
    \uElseIf{$e^{i,j, \uparrow}_{t} \neq -1$}{
        $\boldsymbol{e}^{i}_t\texttt{[}j\texttt{]} \leftarrow e^{i,j, \uparrow}_{t}$;\\
    }
    \Else{
        $\boldsymbol{e}^{i}_t\texttt{[}j\texttt{]} \leftarrow t$;\\
    }
}
$e^{i}_t \leftarrow \texttt{Summarize(} \boldsymbol{e}^{i}_t\texttt{)}$ (Alg. \ref{alg:summarize}) \label{alg:search_I_summarize};  \\
\textbf{Return:} $e^{i}_t$.
\caption{Search Scheme I}\label{alg:search_scheme_I}
\end{algorithm2e}
\end{minipage}
\vspace{-0.7cm}
\end{figure}

\subsection{Multi-Agent Collective Mental Time Travel with \OurMethod}\label{sec:find_the_pivot_time}

With structured agent's past experiences, it can be used to search the pivot timestep when actions that triggered the rewarded state were executed. For example, with \OurMethod, we can find the pivot timestep, $e_t=5$, when agent $1$ shoots the arrow in Fig.~\ref{fig:async_example}. 
\begin{fact}
(Action-Reward Association) \textit{When an \textit{\AsyncActionNotilde} reward $r_t$ exists in the trajectory $\tau_i~(i \in \{1, \cdots, N\})$, $r_t \in \tau_i$, \textit{\AsyncActionNotilde} action $\boldsymbol{u}_{t^{\prime}}$ exists ($t^{\prime} < t$) in the trajectory set $\{\tau_j\}^{N}_{j=1}$, where $\{\tau_j\}^{N}_{j=1}$ constitutes the global trajectory of all agents.\label{fact:action_reward_association}}
\end{fact}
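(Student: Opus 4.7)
The plan is to argue that every off-beat reward at time $t$ must be causally traced back to an action that was initiated at some strictly earlier timestep $t^{\prime} < t$, and then appeal to the construction of the per-agent trajectories $\tau_j$ in Sec.~\ref{sec:triem} to place that action inside the global trajectory set $\{\tau_j\}_{j=1}^{N}$. The argument is essentially a bookkeeping on indices once the reward semantics of Def.~\ref{def:async_decpomdp} are unpacked, so I expect no deep machinery.

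First, I would unpack Def.~\ref{def:async_decpomdp}(6): a reward $r_t$ is off-beat iff $r_t \neq 0$ and the duration vector $\boldsymbol{m}$ attached to $r_t$ contains at least one entry $m_{\thicktilde{u}_i} \geq 1$. The paragraph immediately following Def.~\ref{def:async_decpomdp} pins down the operational meaning: ``the joint reward is the consequence of the committed joint actions of current timestep and previous timesteps, depending on the actions' duration''. This is the semantic ingredient that lets me re-interpret the entries of $\boldsymbol{m}$ in $R(\boldsymbol{s},\thicktilde{\boldsymbol{u}},\boldsymbol{m})$ as pointers to past commitment events rather than as quantities freshly sampled at time $t$.

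Second, I would pick the specific agent $i^{\star}$ whose duration $m_{\thicktilde{u}_{i^{\star}}} \geq 1$ certifies the off-beat status of $r_t$. By Def.~1 of off-beat actions, the execution duration $m_{\thicktilde{u}_{i^{\star}}}$ is decided at the moment the action is committed to the environment, and the action remains in execution for exactly that many steps. If the completion of this action is what realises $r_t$ at time $t$, then its commitment timestep must be $t^{\prime} = t - m_{\thicktilde{u}_{i^{\star}}}$; because $m_{\thicktilde{u}_{i^{\star}}} \geq 1$, this gives $t^{\prime} < t$ as required. The remaining component agents' actions in the same slot may carry duration $0$, but what matters is the existence of at least one coordinate of the joint action with duration $\geq 1$, which is guaranteed by the off-beat reward hypothesis.

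Third, I would identify this committed off-beat action with an actual entry of the global trajectory. Per Sec.~\ref{sec:triem}, each $\tau_{i^{\star}}$ records the triplet $(o^{t^{\prime}}_{i^{\star}}, \thicktilde{u}^{t^{\prime}}_{i^{\star}}, r^{t^{\prime}})$ at every timestep, so $\thicktilde{u}^{t^{\prime}}_{i^{\star}}$ (which has duration $\geq 1$ and is thus off-beat) appears in $\tau_{i^{\star}} \subset \{\tau_j\}_{j=1}^{N}$. Taking $\boldsymbol{u}_{t^{\prime}}$ to be the joint action whose $i^{\star}$-th coordinate is $\thicktilde{u}^{t^{\prime}}_{i^{\star}}$ completes the claim. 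The main obstacle I anticipate is not mathematical but expository: the formal signature $R(\boldsymbol{s},\thicktilde{\boldsymbol{u}},\boldsymbol{m})$ in Def.~\ref{def:async_decpomdp}(5) does not on its face index $R$ by time, so the delicate step is to make the causal reading of $\boldsymbol{m}$ precise enough that the identification $t^{\prime} = t - m_{\thicktilde{u}_{i^{\star}}}$ is unambiguous. Once that interpretation is granted as the intended semantics of \AsyncActionNotilde Dec-POMDP, the rest of the argument reduces to index tracking.
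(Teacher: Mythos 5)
Your proposal is correct and takes essentially the same route as the paper: the paper justifies Fact~\ref{fact:action_reward_association} with only a one-line remark that, since the reward and transition functions are deterministic in its setting, an \AsyncAction reward can always be traced back to the triggering action recorded earlier in the trajectory, and your argument is precisely that remark unpacked into explicit index bookkeeping (choosing $i^{\star}$ with $m_{\thicktilde{u}_{i^{\star}}} \geq 1$, setting $t^{\prime} = t - m_{\thicktilde{u}_{i^{\star}}} < t$, and locating $\thicktilde{u}^{t^{\prime}}_{i^{\star}}$ inside $\tau_{i^{\star}}$). The only addition worth making is an explicit citation of the paper's standing determinism assumption, which your causal identification of the commitment timestep relies on implicitly but never names.
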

As the reward function and transition function are deterministic in our setting, Fact~\ref{fact:action_reward_association} holds. Intuitively, once we find an \AsyncAction reward in a trajectory, we are sure that the action which triggered the reward can be found in the trajectory. With more experiences collected by the agents, such pattern is obvious and significant. It motivates us to propose a method to leverage the association property of the \AsyncAction action-reward data and search the pivot timestep for timesteps when \AsyncAction rewards occur, which can further help to redistribute the reward backward to mitigate the temporal credit assignment issue (c.f. Sec.~\ref{sec:redistribution}). Therefore, we first propose a search method to search the pivot timestep and then propose a proximal ranking method to estimate the pivot timestep that invokes the future reward. 

\begin{figure}[ht]
    \vspace{-0.5cm}
    \centering
    \includegraphics[scale=0.415]{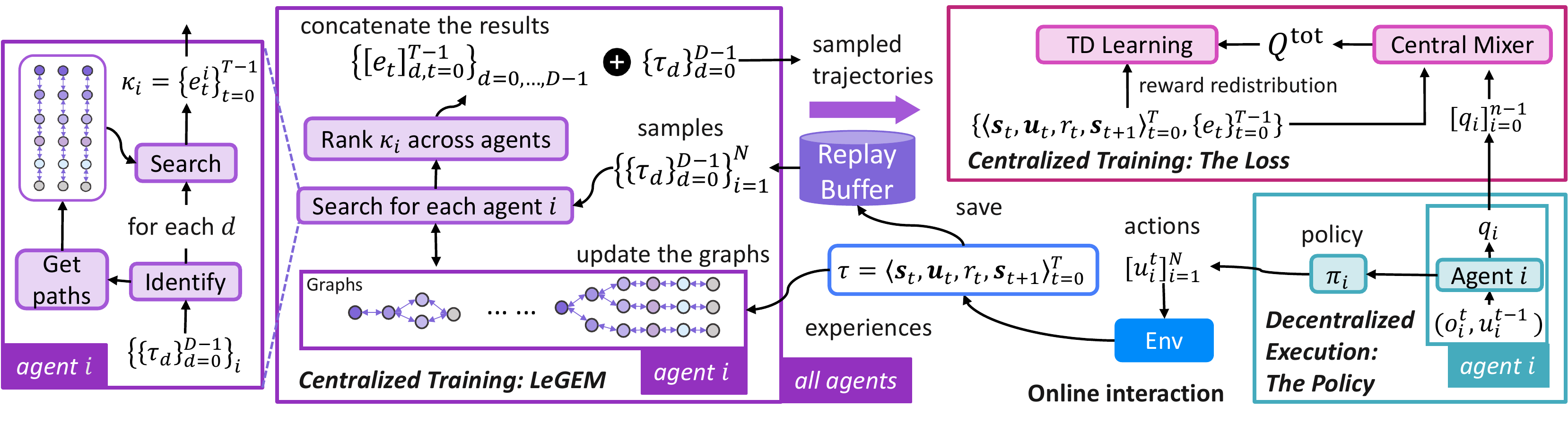}
    \vspace{-0.5cm}
    \caption{\footnotesize{Our framework: \OurMethod, the loss and the agent's policy.}}\label{fig:ctde_framework}
    \vspace{-0.5cm}
\end{figure}
\textbf{Collective Mental Time Travel.} The displaced rewards in the replay buffer hinder multi-agent learning. 
It is essential for each agent to search the pivot timestep when the potential \AsyncAction action that triggered the rewarded state was committed to the environment. Therefore, we propose two search schemes to find the pivot timestep for all agents given an \AsyncAction reward. 
 
\textit{Scheme} I: For agent $i$, given $r_{t} \in \tau_i$, episode return $\boldsymbol{r}^{l,i}$ of $\tau_i$, $\phi^{l}_i = \Phi_i \texttt{[}l\texttt{]}$ and $\phi^{l,\omega}_i = \Phi^{l,\Omega}_i \texttt{[}\omega\texttt{]}$ 
, agent $i$ searches from the node (the key is $(o^{t}_{i}, \thicktilde{u}^{t}_{i})$ and $o^{t}_{i} \in \tau_i$, $u^{t}_{i} \in \tau_i$) at level $t$ in sub-graph $\phi^{l,\omega}_i$ to find the pivot timestep $e_t$ for $r_t$. Concretely, we propose our bi-directional search method. The  first one is called Low-Up (LU) search, which traverses from the given node at level $t$ upwards to the node at level $0$. The second one is named Up-Low (UL) search which
traverses from the node at level $0$ downwards to the given node at level $t$. LU traversing ends when the pattern of increasing visit count
ends and the corresponding level is the candidate pivot timestep. On the contrary, UL traversing ends when the pattern of decreasing visit count ends and the corresponding level is the candidate pivot timestep. In Alg. \ref{alg:search_scheme_I}, we first get visit count (Line \ref{alg:search_I_visitcount}) and then apply UL traversing (Line \ref{alg:search_I_UL}) and LU traversing (Line \ref{alg:search_I_LU}). We summarize the results (Line \ref{alg:search_I_summarize}) by select the pivot timestep that has the maximum count. UL traversing has a higher searching priority than its counterpart. The reason is that there exists pattern that the visit count is decreasing from the node at level $0$ and such pattern ends at the pivot timestep. In practise, it works well in scenarios whose trajectories are single-\AsyncActionNotilde-reward trajectories (there is only one \AsyncAction reward) and the accuracy of Scheme I is over 90\% in grid world scenarios. For scenarios, especially complex scenarios, whose trajectories are multiple-\AsyncActionNotilde-reward trajectories, we apply Scheme II. We put Alg.~\ref{alg:create_mask}, Alg.~\ref{alg:ul}, Alg.~\ref{alg:lu} and Alg.~\ref{alg:summarize} in Appx.~\ref{appendix:ourmethod} as these algorithms are intuitive and easy to understand literally. The time complexity is $\mathcal{O}(n \cdot m)$ (a slight notation abuse) where $n$ is the size of each $\Lambda^{t,l}_{i}$ and $m$ ($1\leq m \leq n$) is the average distance between the level of the given node to the level of the node at the pivot timestep. 

\textit{Scheme} II: Scheme II is a simplified version of scheme I for scenarios that have multiple-\AsyncActionNotilde-reward trajectories, which searches the pivot timestep by finding the nearest timestep in the most visited path. The node of the nearest timestep has the maximum visitcount in that path. Despite the simplicity, it works effective and the time complexity is $\mathcal{O}(n)$ where $n$ is the number of paths in $\Lambda^{t,l}_{i}$. The pseudo code is shown in Alg. \ref{alg:search_scheme_II} in Appx.~\ref{appendix:ourmethod}.

Given a node at level $t$, agents collectively search from the node to find the pivot time step (Line \ref{alg:search_pivot_search_scheme} in Alg. \ref{alg:search_triem_algo}). The visit count is vital for search methods. In MARL, we use $\epsilon$-greedy~\cite{mnih2015human} for agents to explore the environment and collect individual trajectories. The collected trajectories will be used to build the memory and train the policy. We apply annealing to  $\epsilon$ (in Appx.~\ref{appendix:experiments}).

\textbf{Ranking the Pivot Timesteps.} With our two search schemes, we can search the pivot timesteps for each global trajectory $\tau=\{(\boldsymbol{s}^{t}, \thicktilde{\boldsymbol{u}}^{t}, r^t, \boldsymbol{s}^{t+1})\}^{T-1}_{t=0}$. We define the pivot timesteps $\kappa$ of each global trajectory $\tau$ as 
$\kappa=\{e_{t}\}^{T-1}_{t=0}, ~0 \leq e_{t} \leq t$,
where $e_{t}$ indicates the pivot timestep of $t$ when $r_t$ is the consequence of actions committed before timestep $t$. We first get $e_{t}$ by aggregating all the searching outcomes (Line \ref{alg:search_pivot_search_scheme} in Alg. \ref{alg:search_triem_algo}). 
Then, each agent gets $\kappa_i=\{e^{i}_{t}\}^{T-1}_{t=0}$. In order to subserve  the inter-agent credit assignment~\cite{foerster2017counterfactual,rashid2018qmix}, $\kappa$ can be collectively calculated via proximity:
\begin{small}
\begin{equation}\label{eq:calculate_e_t}
    e_{t} = \min_{e^{i}_{t}}\left[ t-e^{1}_{t}, \cdots, t-e^{N}_{t}\right], ~ i \in \{1, \cdots, N\}
\end{equation}
\end{small}\noindent
The pseudo code is shown in Alg. \ref{alg:search_triem_algo}. For each sampled global trajectory $\tau$, we extract $\tau_i$ for each agent in Line \ref{alg:search_extract}; then we get $e_t$ for each agent and aggregate $\kappa$ in line \ref{alg:search_get_e_t} and line \ref{alg:search_return_kappa}, respectively.

\section{Reward Redistribution for \AsyncActionUpperCase Multi-Agent Reinforcement Learning}\label{sec:redistribution}
Searching in \OurMethod~leverages the collective intelligence~\cite{leibo2021meltingpot,ha2021collective} in \MAASYNCENVS. 
We utilize TD learning to train MARL policies. The TD error is the difference between the TD target and the prediction. TD targets can be estimated with $n$-step target, TD($\lambda$) and other techniques~\cite{espeholt2018impala,van2021expected}. Unfortunately, current $n$-step target and TD($\lambda$) methods are far from accurate estimating TD targets. They even incur underestimation with \AsyncAction trajectories. In essence, to train MARL policies in \MAASYNCENVS, one should accurately estimate the TD target where the reward plays the key role~\cite{silver2021reward,zahavy2021reward}. We resolve the aforementioned conundrum by redistributing rewards to their pivot timesteps. The key idea is that we can pull the outcome of one joint \AsyncAction action back to the timestep when it was committed to the environment, which can dramatically enhance learning despite the long-term reward delays incurred by \AsyncAction actions.
We utilize $e_t$ to update the reward  of the transit $(\boldsymbol{s}^{e_t}, \thicktilde{\boldsymbol{u}}^{e_t}, r^{e_t}, \boldsymbol{s}^{e_t+1})$:
\begin{small}
\begin{equation}
    \hat{r}^{e_t}=\mathbbm{1}(e_t \geq t) \cdot r^{e_t}+ \mathbbm{1}(e_t < t) \cdot r_{t}, ~\text{and}~ r_t=(1-\mathbbm{1}(e_t < t) \cdot (1-\beta)) \cdot r_t. \label{eq:rew_redist}
\end{equation}
\end{small}\noindent
where $\mathbbm{1}(\cdot)$ is the indicator function. Such update rule is conducted iteratively from $t=0$ to $t=T-1$. $\beta$ is a very small positive hyperparameter. To stabilize learning and circumvent the overestimation of the TD target and discard the $r^{e_t}$ when $\mathbbm{1}(e_t < t)$ is true, $r_t$ is also updated in Eqn. \ref{eq:rew_redist} to avoid aggregated biased/wrong estimation of TD target being back propagated in Bellman Equation. Formally, we define the reward redistribution operator as $\Pi_{\Phi}$, \ie, $e_t=\Pi_{\Phi} \rho (r^{t}, \boldsymbol{s}, \thicktilde{\boldsymbol{u}})$, and then define the \AsyncActionUpperCase Bellman operator $\Gamma$:
\begin{small}
\begin{equation}
    (\Gamma Q^{\mathrm{tot}})(\boldsymbol{s}, \thicktilde{\boldsymbol{u}}) := \mathbb{E}[ \Pi_{\Phi}R(\boldsymbol{s}, \thicktilde{\boldsymbol{u}}, \boldsymbol{m})  + \gamma \max_{\thicktilde{\boldsymbol{u}}^{\prime}} Q^{\mathrm{tot}}(\boldsymbol{s}^{\prime}, \thicktilde{\boldsymbol{u}}^{\prime}) ]
\end{equation}
\end{small}\noindent
With the \AsyncActionUpperCase Bellman operator $\Gamma$, we propose its contraction property. 
\begin{restatable}{proposition}{PropContract}\label{propContract}
$\Gamma: \mathcal{Q} \mapsto \mathcal{Q}$ is a $\gamma$-contraction.
\end{restatable}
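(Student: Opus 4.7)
The plan is to mimic the standard sup-norm contraction argument for the Bellman optimality operator, exploiting the crucial fact that the redistribution operator $\Pi_{\Phi}$ acts only on the reward and does not depend on the $Q$-function. Concretely, I would fix two action-value functions $Q_1, Q_2 \in \mathcal{Q}$ (bounded measurable functions on $\mathcal{S} \times \mathcal{U}^N$), fix a pair $(\boldsymbol{s}, \thicktilde{\boldsymbol{u}})$, and write out the pointwise difference $(\Gamma Q_1)(\boldsymbol{s},\thicktilde{\boldsymbol{u}}) - (\Gamma Q_2)(\boldsymbol{s},\thicktilde{\boldsymbol{u}})$. Since the expectation is over the transition and the action-duration vector $\boldsymbol{m}$, and since $\Pi_{\Phi} R(\boldsymbol{s},\thicktilde{\boldsymbol{u}},\boldsymbol{m})$ is a function only of $(\boldsymbol{s},\thicktilde{\boldsymbol{u}},\boldsymbol{m})$ and of the (fixed) levelled graph $\Phi$, the redistributed reward cancels inside the expectation and we are left with
\begin{equation*}
(\Gamma Q_1 - \Gamma Q_2)(\boldsymbol{s}, \thicktilde{\boldsymbol{u}}) = \gamma\,\mathbb{E}\!\left[\max_{\thicktilde{\boldsymbol{u}}'} Q_1(\boldsymbol{s}',\thicktilde{\boldsymbol{u}}') - \max_{\thicktilde{\boldsymbol{u}}'} Q_2(\boldsymbol{s}',\thicktilde{\boldsymbol{u}}')\right].
\end{equation*}

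Next I would invoke the elementary inequality $\bigl|\max_{a} f(a) - \max_{a} g(a)\bigr| \le \max_{a} |f(a) - g(a)|$ applied pointwise to $Q_1(\boldsymbol{s}',\cdot)$ and $Q_2(\boldsymbol{s}',\cdot)$, followed by Jensen/monotonicity of the expectation and by bounding $\max_{\thicktilde{\boldsymbol{u}}'} |Q_1 - Q_2|(\boldsymbol{s}',\thicktilde{\boldsymbol{u}}') \le \|Q_1 - Q_2\|_{\infty}$. This yields $|(\Gamma Q_1 - \Gamma Q_2)(\boldsymbol{s}, \thicktilde{\boldsymbol{u}})| \le \gamma \|Q_1 - Q_2\|_{\infty}$, and taking the sup over $(\boldsymbol{s},\thicktilde{\boldsymbol{u}})$ completes the contraction bound $\|\Gamma Q_1 - \Gamma Q_2\|_{\infty} \le \gamma \|Q_1 - Q_2\|_{\infty}$. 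Because $\gamma \in [0,1)$, $\Gamma$ is a $\gamma$-contraction on the Banach space $(\mathcal{Q},\|\cdot\|_{\infty})$.

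The only genuinely delicate point, and where I expect the real care to be required, is justifying that $\Pi_{\Phi}R$ truly does not depend on $Q$ and that it is integrable/bounded so that $\Gamma$ is well-defined on $\mathcal{Q}$. Inspecting Eqn.~\ref{eq:rew_redist}, the redistributed reward at $\boldsymbol{s}^{e_t}$ is a convex-combination-like reshuffling of the original trajectory rewards $\{r^{t}\}$ driven by the pivot timesteps $e_t = \Pi_{\Phi}\rho(r^t,\boldsymbol{s},\thicktilde{\boldsymbol{u}})$ produced by the LeGEM search, together with the shrinkage factor $\beta$. Since the pivot indices depend only on the trajectory and the graph $\Phi$ (not on $Q$), and since $|\hat r^{e_t}| \le \max_t |r^t|$ whenever $\beta \in (0,1]$, the redistributed reward is a bounded function of $(\boldsymbol{s},\thicktilde{\boldsymbol{u}},\boldsymbol{m})$ given $\Phi$. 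I would state this as a short lemma, or as an explicit observation at the start of the proof, so that the cancellation step above is unambiguous; after that, the contraction follows from the standard argument and Banach's fixed-point theorem yields a unique fixed point of $\Gamma$ as an immediate corollary if desired.
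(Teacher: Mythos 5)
Your proposal is correct and follows essentially the same route as the paper's proof: the redistributed reward term $\Pi_{\Phi}R$ cancels in the difference $(\Gamma Q_{(1)} - \Gamma Q_{(2)})$, the inequality $\vert \max_{a} f(a) - \max_{a} g(a)\vert \leq \max_{a}\vert f(a)-g(a)\vert$ is applied inside the expectation over transitions, and bounding by the sup-norm with $\sum_{\boldsymbol{s}'}\mathcal{P}(\boldsymbol{s}'\vert\boldsymbol{s},\thicktilde{\boldsymbol{u}})=1$ gives $\Vert \Gamma Q_{(1)} - \Gamma Q_{(2)}\Vert_{\infty} \leq \gamma \Vert Q_{(1)} - Q_{(2)}\Vert_{\infty}$. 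Your explicit observation that $\Pi_{\Phi}R$ is $Q$-independent and bounded (which licenses the cancellation) is left implicit in the paper, so making it a stated preliminary remark is a modest improvement in rigor rather than a different argument.
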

Therefore, we can utilize $\hat{r}_{e_t}$for \textit{centralized training} in TD-learning:
\begin{small}
\begin{equation}
    \mathcal{L}^{\operatorname{TD}} (\theta) :=\mathbb{E}_{\mathcal{D}^{\prime} \sim \mathcal{D}}[(\hat{y}_{e_t}^{\text {tot}}-Q_{\theta}^{\text {tot }}(\boldsymbol{s}^{e_t}, \thicktilde{\boldsymbol{u}}^{e_t}))^{2} ],~~\text{where} ~~\hat{y}_{e_t}^{\mathrm{tot}}=\hat{r}^{e_t}+\gamma \max _{\thicktilde{\boldsymbol{u}}^{\prime}} Q_{\bar{\theta}}^{\mathrm{tot}}(\boldsymbol{s}^{e_t+1}, \thicktilde{\boldsymbol{u}}^{\prime}).
\label{eq:new_td_loss_dqn}
\end{equation}
\end{small}\noindent
Our method can be easily incorporated into any model-free MARL method for \MAASYNCENVS. 
We present the pseudo code of incorporating our method into model-free MARL methods in Alg. \ref{alg:triem_algo}, Appx.~\ref{appendix:experiments}.
We also provide a pictorial view of our framework in Fig. \ref{fig:ctde_framework} to show the whole pipeline. 

\section{Experiments}\label{sec:experiments}


We perform experiments on various multi-agent scenarios with \AsyncAction actions. We introduce \AsyncAction actions in Stag-Hunter Game, Quarry Game,  Afforestation Game and StarCraft II micromanagement tasks~\cite{samvelyan19smac} and use them as testbeds in our experiments. We aim to answer the following questions: \textbf{Q1:} \textit{Can our \OurMethod~improve the multi-agent coordination of many MARL methods in \MAASYNCENVS?} \textbf{Q2:} \textit{Can our \OurMethod~outperform previous parameterized episodic memory (EM) for MARL?}
\textbf{Q3:} \textit{Can bootstrapping method of RL help?} \textbf{Q4:} \textit{Can our \OurMethod~outperform the multi-agent exploration and multi-agent risk-sensitive (Ex-Risk) methods?}

\begin{figure}[ht]
    \vspace{-0.3cm}
    \centering
    \includegraphics[scale=0.35]{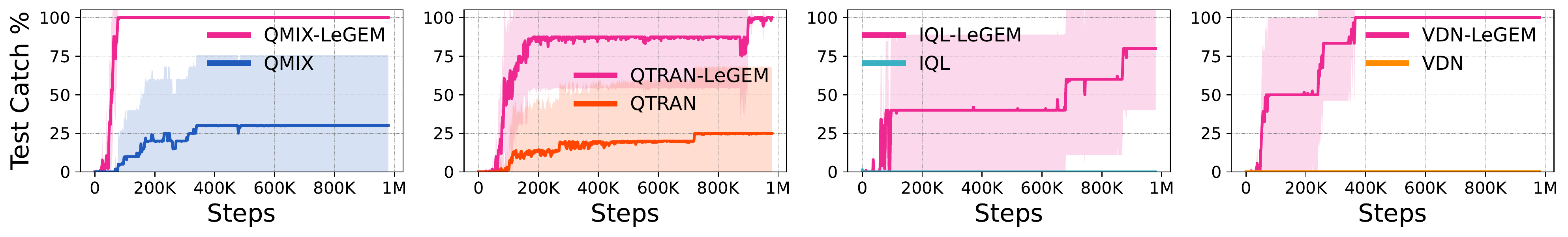}
    \vspace{-0.5cm}
    \caption{\footnotesize{The test catch rate of the stag on the Stag-Hunter Game with \AsyncAction actions.}}
    \label{fig:two_agent_stag_hunter_game}
    \vspace{-0.4cm}
\end{figure}

\subsection{Experiment Setup}

\begin{minipage}{\textwidth}
    \vspace{-0.3cm}
    \begin{minipage}[b]{0.53\textwidth}
    \centering
    \includegraphics[height=3.3cm,width=0.9\textwidth]{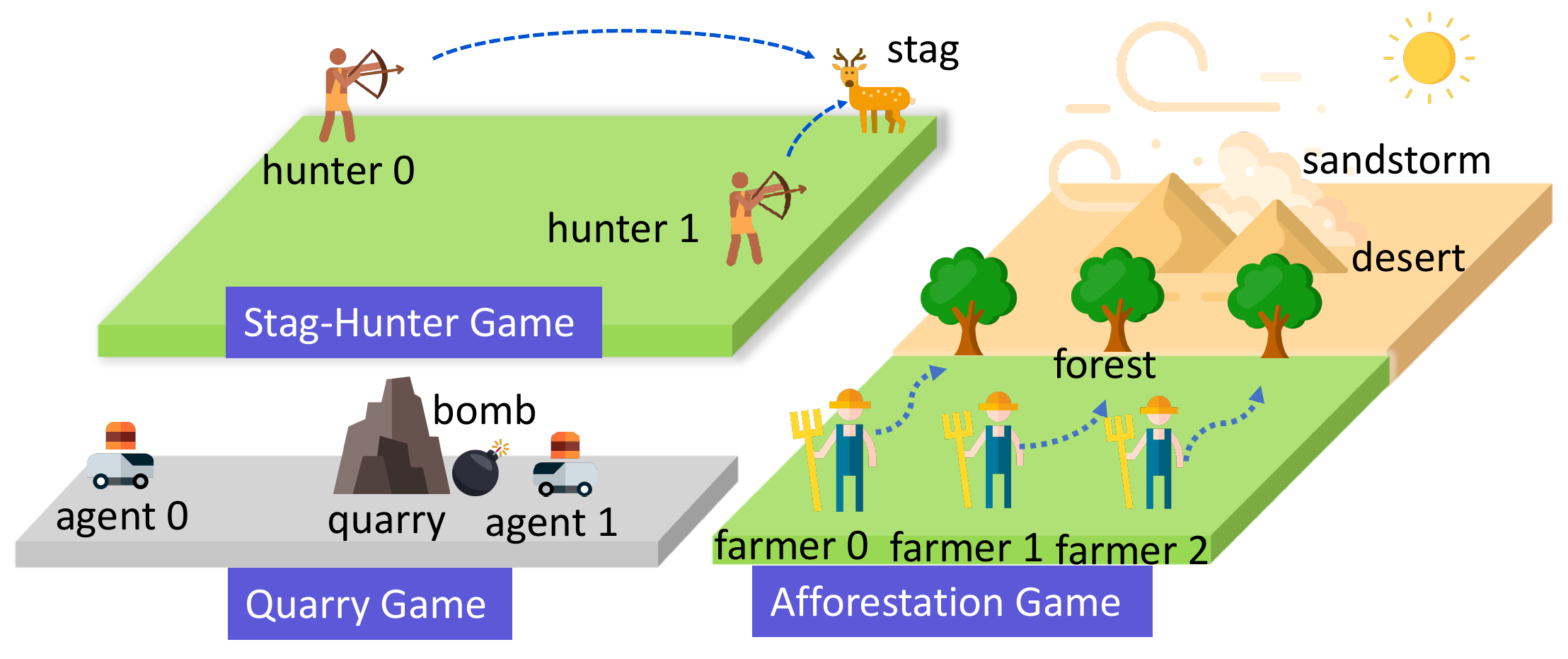}
    \vspace{-0.3cm}
    \captionof{figure}{\footnotesize{Stag-Hunter Game, Quarry Game and Afforestation Game. More information can be found in Appx.~\ref{appendix:environments}.}}
    \label{fig:3_games}
  \end{minipage}
  \hfill
  \begin{minipage}[b]{0.47\textwidth}
    \centering
    \setlength\tabcolsep{2.8pt}
    \renewcommand{\arraystretch}{0.75}
    \begin{tabular}{crcrcr}
        \toprule
        \multicolumn{2}{c}{\footnotesize{Categories}} &
        \multicolumn{2}{l}{\footnotesize{Methods}} \\
        \cmidrule(lr){1-2}
        \cmidrule(lr){3-4}
        \multicolumn{2}{c}{\multirow{3}{*}{\makecell{\footnotesize{MARL (\textbf{Q1})}}}} &  
        \multicolumn{2}{l}{\footnotesize{QMIX}~\cite{rashid2018qmix}, \footnotesize{VDN}~\cite{sunehag2017value}} \\
        \multicolumn{2}{c}{} & \multicolumn{2}{l}{\footnotesize{IQL}~\cite{tampuu2017multiagent}, \footnotesize{QTRAN}~\cite{son2019qtran}} \\
        \multicolumn{2}{c}{} & \multicolumn{2}{l}{\footnotesize{QPLEX}~\cite{wang2020qplex}} \\
        \cmidrule(lr){1-2}
        \cmidrule(lr){3-4}
        \multicolumn{2}{c}{\multirow{1}{*}{\makecell{\footnotesize{EM (\textbf{Q2})}}}} & 
        \multicolumn{2}{l}{\footnotesize{EMC}~\cite{zheng2021EMC}} \\
        


        \cmidrule(lr){1-2}
        \cmidrule(lr){3-4}

        \multicolumn{1}{c}{\multirow{1}{*}{\makecell{\footnotesize{Bootstrap (\textbf{Q3})}}}} & 
        \multicolumn{2}{l}{\footnotesize{N-step \&$\lambda$-Return}~\cite{sutton2018reinforcement}} \\

        \cmidrule(lr){1-2}
        \cmidrule(lr){3-4}

        \multicolumn{2}{c}{\multirow{2}{*}{\makecell{\footnotesize{Ex-Risk  (\textbf{Q4})}}}} & 
        \multicolumn{2}{l}{\footnotesize{MAVEN}~\cite{mahajan2019maven}, \footnotesize{EMC}~\cite{zheng2021EMC}} \\
        \multicolumn{2}{c}{} & \multicolumn{2}{l}{\footnotesize{RMIX}~\cite{qiu2021rmix}} \\
        \toprule
    \vspace{-0.5cm}
    \end{tabular}
    \captionof{table}{\footnotesize{Baseline algorithms}.}\label{tab:baselines}
  \end{minipage}
\end{minipage}

\textbf{Baselines and scenarios.} We list all baselines in table \ref{tab:baselines}, including the corresponding research questions to be answered. We implement our method on PyMARL~\cite{samvelyan19smac} and use 10 random seeds to train each method on all environments. We do not use macro-action methods~\cite{xiao2020macro,xiao2020multi} as the baseline because it is hard to make a fair comparison between macro-actions methods and our method. As discussed in Sec. \ref{sec:intro}, macro-actions rely on manually designed macro-actions, \ie, designing the macro-actions by utilizing the simulator settings and domain knowledge, which is different from learning options~\cite{sutton1999between,bacon2017option}. Designing macro-actions is not feasible in scenarios where domain knowledge and simulator settings are unknown, such as the OBMAS scenarios. In OBMAS, the agent has no idea of the durations of other agents’  actions, which is challenging for designing macro-actions. We conduct experiments on Stag-Hunter Game, Quarry Game, Afforestation Game (Fig. \ref{fig:3_games}) and StarCraft II micromanagement tasks~\cite{samvelyan19smac} where \AsyncAction action are introduced. 

\textbf{Training settings.} We use opensourced code of baselines publicly by the corresponding authors on Github in all experiments. We resort to mean-std values as our performance evaluation measurement in all figures where the bold line and the shaded area indicate the mean value and one standard deviation of the episode return, respectively. Readers can refer to Appx.~\ref{appendix:environments}, \ref{appendix:baselines},  \ref{appendix:experiments} and \ref{appendix:exp_results} for more information on our environment, baselines, training method, training platform and empirical results.

\begin{figure}[ht]
    \vspace{-0.4cm}
    \centering
    \includegraphics[scale=0.35]{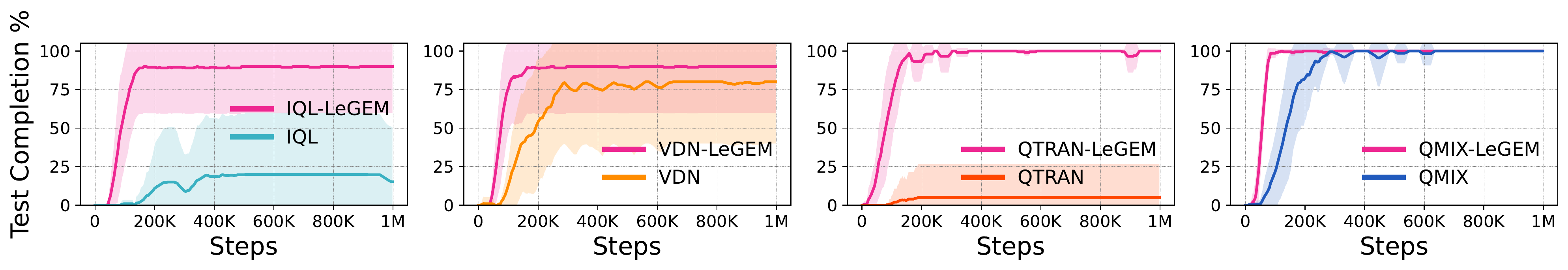}
    \vspace{-0.6cm}
    \caption{\footnotesize{The test task completion rate of the Quarry Game with \AsyncAction actions.}}
    \label{fig:two_agent_quarry_game}
    \vspace{-0.3cm}
\end{figure}
\begin{figure}[ht]
    \vspace{-0.3cm}
    \centering
    \includegraphics[scale=0.375]{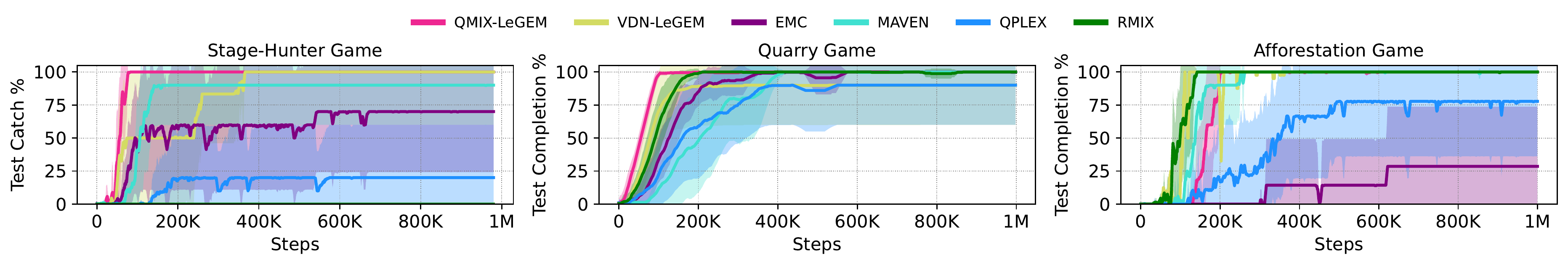}
    \vspace{-0.7cm}
    \caption{\footnotesize{Performance of MARL methods}}\label{baselines_1}
    \vspace{-0.5cm}
\end{figure}
\subsection{Experiment Results}\label{sec:exp_results}

\textbf{The Effectiveness of \OurMethod.} We answer \textbf{Q1}. With \OurMethod, MARL methods get enhanced performance as shown in Fig. \ref{fig:two_agent_stag_hunter_game}. Without \OurMethod, all methods perform poorly in Stag-Hunter Game; IQL and VDN's final final results are even $0$. By incorporating \OurMethod, all of them can get converged performance and improved sample efficiency. We are also interested in finding if \OurMethod~could reinforce the performance of simple methods. As depicted in Fig. \ref{baselines_1}, with \OurMethod, both VDN and QMIX outperforms QPLEX, which is a state-of-the-art MARL method armed with various advanced techniques, including attention network~\cite{vaswani2017attention}, dueling network~\cite{wang2016dueling} and advantage function. 




\textbf{Performance of Episodic Memory method.} We answer \textbf{Q2} by presenting the performance curves of EMC in Fig. \ref{baselines_1}. EMC is an episodic memory MARL method with curiosity-driven exploration. It utilizes the episodic memory from RL~\cite{Zhu2020Episodic,HuHaoGEM2021}.With \OurMethod, QMIX outperforms EMC. EMC even fails to converge in Stag-Hunter Game.


\begin{table}[h]
    \vspace{-0.5cm}
    \setlength\tabcolsep{2.8pt}
    \centering
    \caption{\footnotesize{Results (mean and std) of $n$-step return (left) and TD($\lambda$) (right)  on Stag-Hunter Game.}}\label{n_step_lambda_stag_hunter}
    \begin{tabular}{{l}{l}{l}{l}{l}|{l}{l}{l}{l}{l}}
    \hline
     $n$ & $1$ & $5$ & $10$ & $15$ & $\lambda$ & $0.8$ & $0.9$ & $0.99$ & $1$\\
    \hline\hline
    \footnotesize{QMIX} & \footnotesize{$60.0\pm40\%$} & \footnotesize{$0\pm0$} & \footnotesize{$0\pm0$} & \footnotesize{$0\pm0$} & \footnotesize{QMIX} & \footnotesize{$100\pm0\%$} & \footnotesize{$100\pm0\%$} & \footnotesize{$89\pm10\%$} & \footnotesize{$61\pm37\%$} \\
    \hline
    \footnotesize{VDN} & \footnotesize{$0\pm0$} & \footnotesize{$0\pm0$} & \footnotesize{$0\pm0$} & \footnotesize{$0\pm0$} & \footnotesize{VDN} & \footnotesize{$0\pm0$} & \footnotesize{$0\pm0$} & \footnotesize{$0\pm0$} & \footnotesize{$0\pm0$}\\
    \hline
    \end{tabular}
    \vspace{-0.3cm}
\end{table}

\textbf{Performance of $n$-step return and TD($\lambda$) methods.} To answer \textbf{Q3}, we use $n$-step return and TD($\lambda$) to estimate the TD-target. As shown in Table. \ref{n_step_lambda_stag_hunter}, with $n$-step return, both QMIX and VDN fail to learn good policies even with $n=15$. Surprisingly, with TD($\lambda$), QMIX can achieve good performance with $\lambda \in \{0.8, 0.9, 0.99, 1\}$. However, we cannot find such outcome on VDN and there is no guarantee of good results on using TD($\lambda$). 

\textbf{Performance of Multi-Agent Exploration and Risk-Sensitive MARL methods.} We also provide results of exploration methods for MARL and risk-sensitive MARL method to answer \textbf{Q4}. MAVEN utilizes mutual information to learn latent space for exploration and RMIX aims to learning risk-sensitive policies for MARL.
In Fig. \ref{baselines_1}, RMIX even fails to learn.
Mainly because the potential loss of reward is displaced by \AsyncAction actions. 
Overall, MAVEN is 
stabler than EMC and RMIX.
QMIX-\OurMethod~is stable in all scenarios and outperforms MAVEN. 
With \OurMethod, even simple method such VDN can
perform well and outperforms many MARL methods with complex and advanced components. Indeed, exploration in \MAASYNCENVS~is beneficial for multi-agent learning. However, the key challenge of temporal credit assignment can not be easily addressed merely with exploration.

\begin{wrapfigure}{r}{0.39\textwidth}
    \vspace{-0.5cm}
    \centering
    \includegraphics[scale=0.5]{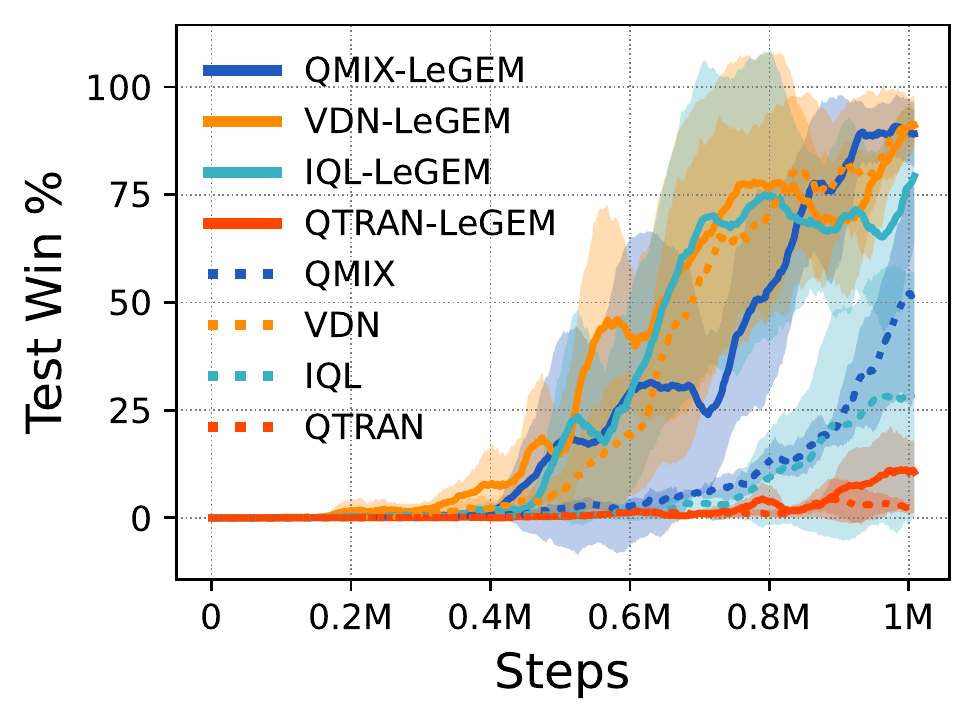}
    \vspace{-0.5cm}
    \caption{\footnotesize{The performance of MARL methods on 2m\_vs\_1z.}}
    \label{fig:smac_res_1}
    \vspace{-0.4cm}
\end{wrapfigure}
\textbf{SMAC.} We also conduct experiments on SMAC~\cite{samvelyan19smac}. We train MARL methods and our method on 2m\_vs\_1z where are two agents combating with one opponent. To overcome the issue of high dimension continuous state space, We utilize simhash~\cite{charikar2002similarity} to calculate the hash value of the key. We only select the attack action and set the action duration with 9. As illustrated in Fig. \ref{fig:smac_res_1}, incorporated with our novel episodic memory, QMIX, IQL and VDN illustrate enhanced performance, demonstrating the superiority of our method on complex multi-agent scenarios.

\section{Related Works}\label{sec:related_works}
\textbf{Action Delay in RL.} Conventionally, the execution of actions in RL is instantaneous and the execution duration is neglected. Katsikopoulos et al.~\cite{katsikopoulos2003markov} propose the Delayed MDP where actions have delays and Walsh et al~\cite{walsh2009learning} propose a model-based method for the Delayed MDP.
To optimize the delayed MDP, many RL approaches~\cite{walsh2009learning,ramstedt2019real,xiao2019thinking,yuan2022asynchronous} augment the state space with the queuing actions to be executed into the environment. However, this state-augmentation trick is intractable~\cite{derman2020acting}. Chen et al.~\cite{chen2020delay} extend the delayed MDP~\cite{ramstedt2019real} and propose a Delayed Markov Game. However, the state-augmentation treatment is confined to short delays and neglects the \AsyncAction actions in multi-agent scenarios. Recently, Bouteiller et al.~\cite{bouteiller2020reinforcement} apply replay buffer correction method.
However, the delayed timestep is privileged information. It is not available for agents in many scenarios. Simply applying this single-agent trajectory correction in MARL cannot attain satisfactory performance due to \AsyncAction actions; devising inter-agent trajectory correction methods for \MAASYNCENVS~is non-trivial. 

\textbf{Credit Assignment in RL.} Credit assignment~\cite{sutton1984temporal,sutton1999between} tackles long-horizon sequential decision-making problem by distributing the contribution of each single step over the temporal interval. TD learning~\cite{sutton2018reinforcement} is the most established credit assignment method, which is the basis of many RL methods. RUDDER~\cite{rudder2019} redistributes the episodic return to key timesteps in the episode~\cite{gangwani2020learning,ren2021learning,raposo2021synthetic}. Klissarov et al.~\cite{klissarov2020reward} propose a reward propogation method via graph convolutional neural network~\cite{kipf2016semi}. Another line of works utilize episodic memory (EM)~\cite{pritzel2017neural,botvinick2019reinforcement,zhou2019memory,ma2021state,Zhu2020Episodic} to recall key events and aggregate information of the past for decision-making or learning. 
However, simply applying EM of RL to MARL cannot perform well in \MAASYNCENVS~due to the non-stationarity and the displaced rewards. 

\textbf{Multi-Agent RL.} 
Many MARL methods focus on factorizing the global Q value to train agents' policies via CTDE~\cite{foerster2017counterfactual,sunehag2017value,rashid2018qmix,son2019qtran,wang2020qplex,wang2021dop,pan2021regularized}. 
However, these existing works assume actions are executed synchronously. 
Messias et al.~\cite{messias2013multiagent} propose an event-driven, asynchronous formulation of the multi-agent POMDP. However, the assumption of free communication~\cite{wang2020learning} is limited and the asynchronous execution~\cite{omidshafiei2015decentralized} in the paper is confined to the design of events and did not propose methods on solving challenging credit assignment issue in \MAASYNCENVS. Recently, Amato et al.~\cite{amato2019modeling} and Xiao et al.~\cite{xiao2020macro,xiao2020multi} propose macro-action methods, which are similar to hierarchical methods. Macro-actions are manually designed via abstracting primitive actions. However, macro-action methods mainly focus on macro-action selection during multi-timestep decision-making and assume the environment can use manually pre-defined methods for state transition. Unfortunately, the above works either focus on synchronous actions or defining specific asynchronous execution components with human knowledge. Learning coordination in \MAASYNCENVS~remains a challenge. 

\section{Conclusion}\label{sec:conclusion}

In this paper, we investigate model-free MARL with \AsyncAction actions. 
To address challenges in \MAASYNCENVS, we first propose \AsyncActionUpperCase Dec-POMDP. Then, we propose a new class of episodic memory, \OurMethod, for model-free MARL algorithms. \OurMethod~addresses the challenging temporal credit assignment problem raised by \AsyncAction actions in TD-learning via the novel reward redistribution scheme. We evaluate our method on various \MAASYNCENVS~scenarios. Empirical results show that our method significantly boosts the multi-agent coordination and achieves leading performance as well as improved sample efficiency.

\textbf{Limitations and Future Work.} Searching from a graph-structured episodic memory takes much overhead in \OurMethod. Scaling up \OurMethodTilde to complex \MAASYNCENVS~is our future direction. Recently, there is a growing interest in model-based planing~\cite{schrittwieser2020mastering}. Leveraging \OurMethod~for model-based planning is also our future work. Our paper focuses on Dec-POMDP-based MARL methods. We leave it to future work for investigating \AsyncAction actions in frameworks like Markov Game~\cite{littman1994markov} and MMDP~\cite{boutilier1996planning}. 
We are also interested in finding the merit of our method in real-world problem in our future work, such as scheduling~\cite{mao2019learning} with \AsyncAction settings.

\bibliography{main}

\begin{thebibliography}{10}

\bibitem{amato2019modeling}
C.~Amato, G.~Konidaris, L.~P. Kaelbling, and J.~P. How.
\newblock Modeling and planning with macro-actions in decentralized pomdps.
\newblock {\em Journal of Artificial Intelligence Research}, 64:817--859, 2019.

\bibitem{rudder2019}
J.~A. Arjona-Medina, M.~Gillhofer, M.~Widrich, T.~Unterthiner, J.~Brandstetter,
  and S.~Hochreiter.
\newblock {RUDDER}: Return decomposition for delayed rewards.
\newblock In {\em Advances in Neural Information Processing Systems},
  volume~32, 2019.

\bibitem{bacon2017option}
P.-L. Bacon, J.~Harb, and D.~Precup.
\newblock The option-critic architecture.
\newblock In {\em Proceedings of the AAAI Conference on Artificial
  Intelligence}, volume~31, 2017.

\bibitem{biggs1986graph}
N.~Biggs, E.~K. Lloyd, and R.~J. Wilson.
\newblock {\em Graph Theory, 1736-1936}.
\newblock Oxford University Press, 1986.

\bibitem{botvinick2019reinforcement}
M.~Botvinick, S.~Ritter, J.~X. Wang, Z.~Kurth-Nelson, C.~Blundell, and
  D.~Hassabis.
\newblock Reinforcement learning, fast and slow.
\newblock {\em Trends in Cognitive Sciences}, 23(5):408--422, 2019.

\bibitem{bouteiller2020reinforcement}
Y.~Bouteiller, S.~Ramstedt, G.~Beltrame, C.~Pal, and J.~Binas.
\newblock Reinforcement learning with random delays.
\newblock In {\em International Conference on Learning Representations}, 2020.

\bibitem{boutilier1996planning}
C.~Boutilier.
\newblock Planning learning and coordination in multiagent decision processes.
\newblock In {\em Proceedings of the 6th Conference on the Theoretical Aspects
  of Rationality and Knowledge}, pages 195--210, 1996.

\bibitem{cao2012overview}
Y.~Cao, W.~Yu, W.~Ren, and G.~Chen.
\newblock An overview of recent progress in the study of distributed
  multi-agent coordination.
\newblock {\em IEEE Transactions on Industrial Informatics}, 9(1):427--438,
  2012.

\bibitem{charikar2002similarity}
M.~S. Charikar.
\newblock Similarity estimation techniques from rounding algorithms.
\newblock In {\em Proceedings of the thiry-fourth Annual ACM Symposium on
  Theory of Computing}, pages 380--388, 2002.

\bibitem{chen2020delay}
B.~Chen, M.~Xu, Z.~Liu, L.~Li, and D.~Zhao.
\newblock Delay-aware multi-agent reinforcement learning for cooperative and
  competitive environments.
\newblock {\em arXiv e-prints}, pages arXiv--2005, 2020.

\bibitem{derman2020acting}
E.~Derman, G.~Dalal, and S.~Mannor.
\newblock Acting in delayed environments with non-stationary {M}arkov policies.
\newblock In {\em International Conference on Learning Representations}, 2020.

\bibitem{espeholt2018impala}
L.~Espeholt, H.~Soyer, R.~Munos, K.~Simonyan, V.~Mnih, T.~Ward, Y.~Doron,
  V.~Firoiu, T.~Harley, I.~Dunning, et~al.
\newblock Impala: Scalable distributed deep-{RL} with importance weighted
  actor-learner architectures.
\newblock In {\em International Conference on Machine Learning}, pages
  1407--1416, 2018.

\bibitem{foerster2017counterfactual}
J.~Foerster, G.~Farquhar, T.~Afouras, N.~Nardelli, and S.~Whiteson.
\newblock Counterfactual multi-agent policy gradients.
\newblock {\em arXiv preprint arXiv:1705.08926}, 2017.

\bibitem{gangwani2020learning}
T.~Gangwani, Y.~Zhou, and J.~Peng.
\newblock Learning guidance rewards with trajectory-space smoothing.
\newblock {\em Advances in Neural Information Processing Systems}, 33:822--832,
  2020.

\bibitem{ha2021collective}
D.~Ha and Y.~Tang.
\newblock Collective intelligence for deep learning: A survey of recent
  developments.
\newblock {\em arXiv preprint arXiv:2111.14377}, 2021.

\bibitem{han2021off}
B.~Han, Z.~Ren, Z.~Wu, Y.~Zhou, and J.~Peng.
\newblock Off-policy reinforcement learning with delayed rewards.
\newblock {\em arXiv preprint arXiv:2106.11854}, 2021.

\bibitem{HuHaoGEM2021}
H.~Hu, J.~Ye, G.~Zhu, Z.~Ren, and C.~Zhang.
\newblock Generalizable episodic memory for deep reinforcement learning.
\newblock In {\em Proceedings of the 38th International Conference on Machine
  Learning}, pages 4380--4390, 18--24 Jul 2021.

\bibitem{hung2019optimizing}
C.-C. Hung, T.~Lillicrap, J.~Abramson, Y.~Wu, M.~Mirza, F.~Carnevale, A.~Ahuja,
  and G.~Wayne.
\newblock Optimizing agent behavior over long time scales by transporting
  value.
\newblock {\em Nature Communications}, 10(1):1--12, 2019.

\bibitem{huttenrauch2017guided}
M.~H{\"u}ttenrauch, A.~{\v{S}}o{\v{s}}i{\'c}, and G.~Neumann.
\newblock Guided deep reinforcement learning for swarm systems.
\newblock {\em In AAMAS 2017 Autonomous Robots and Multirobot Systems (ARMS)
  Workshop}, 2017.

\bibitem{katsikopoulos2003markov}
K.~V. Katsikopoulos and S.~E. Engelbrecht.
\newblock Markov decision processes with delays and asynchronous cost
  collection.
\newblock {\em IEEE Transactions on Automatic Control}, 48(4):568--574, 2003.

\bibitem{kipf2016semi}
T.~N. Kipf and M.~Welling.
\newblock Semi-supervised classification with graph convolutional networks.
\newblock {\em arXiv preprint arXiv:1609.02907}, 2016.

\bibitem{klissarov2020reward}
M.~Klissarov and D.~Precup.
\newblock Reward propagation using graph convolutional networks.
\newblock {\em Advances in Neural Information Processing Systems}, 33, 2020.

\bibitem{kuba2021trust}
J.~G. Kuba, R.~Chen, M.~Wen, Y.~Wen, F.~Sun, J.~Wang, and Y.~Yang.
\newblock Trust region policy optimisation in multi-agent reinforcement
  learning.
\newblock {\em arXiv preprint arXiv:2109.11251}, 2021.

\bibitem{lampinen2021mentaltime}
A.~K. Lampinen, S.~C. Chan, A.~Banino, and F.~Hill.
\newblock Towards mental time travel: a hierarchical memory for reinforcement
  learning agents.
\newblock In A.~Beygelzimer, Y.~Dauphin, P.~Liang, and J.~W. Vaughan, editors,
  {\em Advances in Neural Information Processing Systems}, 2021.

\bibitem{leibo2021meltingpot}
J.~Z. Leibo, E.~D. nez Guzm\'an, A.~S. Vezhnevets, J.~P. Agapiou, P.~Sunehag,
  R.~Koster, J.~Matyas, C.~Beattie, I.~Mordatch, and T.~Graepel.
\newblock Scalable evaluation of multi-agent reinforcement learning with
  melting pot.
\newblock PMLR, 2021.

\bibitem{littman1994markov}
M.~L. Littman.
\newblock Markov games as a framework for multi-agent reinforcement learning.
\newblock In {\em Machine Learning Proceedings 1994}, pages 157--163. Elsevier,
  1994.

\bibitem{ma2021state}
X.~Ma and W.-J. Li.
\newblock State-based episodic memory for multi-agent reinforcement learning.
\newblock {\em arXiv preprint arXiv:2110.09817}, 2021.

\bibitem{mahajan2019maven}
A.~Mahajan, T.~Rashid, M.~Samvelyan, and S.~Whiteson.
\newblock {MAVEN}: Multi-agent variational exploration.
\newblock In {\em Advances in Neural Information Processing Systems}, pages
  7613--7624, 2019.

\bibitem{mao2019learning}
H.~Mao, M.~Schwarzkopf, S.~B. Venkatakrishnan, Z.~Meng, and M.~Alizadeh.
\newblock Learning scheduling algorithms for data processing clusters.
\newblock In {\em Proceedings of the ACM special interest group on data
  communication}, pages 270--288. 2019.

\bibitem{messias2013multiagent}
J.~V. Messias, M.~T. Spaan, and P.~U. Lima.
\newblock Multiagent pomdps with asynchronous execution.
\newblock In {\em Proceedings of the 2013 International Conference on
  Autonomous Agents and Multi-agent Systems}, pages 1273--1274, 2013.

\bibitem{mnih2015human}
V.~Mnih, K.~Kavukcuoglu, D.~Silver, A.~A. Rusu, J.~Veness, M.~G. Bellemare,
  A.~Graves, M.~Riedmiller, A.~K. Fidjeland, G.~Ostrovski, et~al.
\newblock Human-level control through deep reinforcement learning.
\newblock {\em Nature}, 518(7540):529--533, 2015.

\bibitem{oliehoek2016concise}
F.~A. Oliehoek, C.~Amato, et~al.
\newblock {\em A Concise Introduction to Decentralized {POMDPs}}, volume~1.
\newblock Springer, 2016.

\bibitem{oliehoek2008optimal}
F.~A. Oliehoek, M.~T. Spaan, and N.~Vlassis.
\newblock Optimal and approximate q-value functions for decentralized {POMDP}s.
\newblock {\em Journal of Artificial Intelligence Research}, 32:289--353, 2008.

\bibitem{omidshafiei2015decentralized}
S.~Omidshafiei, A.-A. Agha-Mohammadi, C.~Amato, and J.~P. How.
\newblock Decentralized control of partially observable markov decision
  processes using belief space macro-actions.
\newblock In {\em 2015 IEEE International Conference on Robotics and Automation
  (ICRA)}, pages 5962--5969, 2015.

\bibitem{pan2021regularized}
L.~Pan, T.~Rashid, B.~Peng, L.~Huang, and S.~Whiteson.
\newblock Regularized softmax deep multi-agent q-learning.
\newblock {\em Advances in Neural Information Processing Systems}, 34, 2021.

\bibitem{powell2007approximate}
W.~B. Powell.
\newblock {\em Approximate Dynamic Programming: Solving the curses of
  dimensionality}, volume 703.
\newblock John Wiley \& Sons, 2007.

\bibitem{pritzel2017neural}
A.~Pritzel, B.~Uria, S.~Srinivasan, A.~P. Badia, O.~Vinyals, D.~Hassabis,
  D.~Wierstra, and C.~Blundell.
\newblock Neural episodic control.
\newblock In {\em International Conference on Machine Learning}, pages
  2827--2836, 2017.

\bibitem{qiu2021rmix}
W.~Qiu, X.~Wang, R.~Yu, R.~Wang, X.~He, B.~An, S.~Obraztsova, and
  Z.~Rabinovich.
\newblock {RMIX}: Learning risk-sensitive policies for cooperative
  reinforcement learning agents.
\newblock In {\em Advances in Neural Information Processing Systems}, 2021.

\bibitem{ramstedt2019real}
S.~Ramstedt and C.~Pal.
\newblock Real-time reinforcement learning.
\newblock {\em Advances in Neural Information Processing Systems},
  32:3073--3082, 2019.

\bibitem{raposo2021synthetic}
D.~Raposo, S.~Ritter, A.~Santoro, G.~Wayne, T.~Weber, M.~Botvinick, H.~van
  Hasselt, and F.~Song.
\newblock Synthetic returns for long-term credit assignment.
\newblock {\em arXiv preprint arXiv:2102.12425}, 2021.

\bibitem{rashid2018qmix}
T.~Rashid, M.~Samvelyan, C.~Schroeder, G.~Farquhar, J.~Foerster, and
  S.~Whiteson.
\newblock {QMIX}: Monotonic value function factorisation for deep multi-agent
  reinforcement learning.
\newblock In {\em International Conference on Machine Learning}, pages
  4295--4304, 2018.

\bibitem{ren2021learning}
Z.~Ren, R.~Guo, Y.~Zhou, and J.~Peng.
\newblock Learning long-term reward redistribution via randomized return
  decomposition.
\newblock {\em arXiv e-prints}, pages arXiv--2111, 2021.

\bibitem{rockafellar2000optimization}
R.~T. Rockafellar, S.~Uryasev, et~al.
\newblock Optimization of conditional value-at-risk.
\newblock {\em Journal of Risk}, 2:21--42, 2000.

\bibitem{samvelyan19smac}
M.~Samvelyan, T.~Rashid, C.~S. de~Witt, G.~Farquhar, N.~Nardelli, T.~G.~J.
  Rudner, C.-M. Hung, P.~H.~S. Torr, J.~Foerster, and S.~Whiteson.
\newblock {The} {StarCraft} {Multi}-{Agent} {Challenge}.
\newblock {\em CoRR}, abs/1902.04043, 2019.

\bibitem{schrittwieser2020mastering}
J.~Schrittwieser, I.~Antonoglou, T.~Hubert, K.~Simonyan, L.~Sifre, S.~Schmitt,
  A.~Guez, E.~Lockhart, D.~Hassabis, T.~Graepel, et~al.
\newblock Mastering atari, go, chess and shogi by planning with a learned
  model.
\newblock {\em Nature}, 588(7839):604--609, 2020.

\bibitem{silver2021reward}
D.~Silver, S.~Singh, D.~Precup, and R.~S. Sutton.
\newblock Reward is enough.
\newblock {\em Artificial Intelligence}, 299:103535, 2021.

\bibitem{son2019qtran}
K.~Son, D.~Kim, W.~J. Kang, D.~E. Hostallero, and Y.~Yi.
\newblock {QTRAN}: Learning to factorize with transformation for cooperative
  multi-agent reinforcement learning.
\newblock In {\em International Conference on Machine Learning}, pages
  5887--5896, 2019.

\bibitem{suddendorf2009mental}
T.~Suddendorf, D.~R. Addis, and M.~C. Corballis.
\newblock Mental time travel and the shaping of the human mind.
\newblock {\em Philosophical Transactions of the Royal Society B: Biological
  Sciences}, 364(1521):1317--1324, 2009.

\bibitem{sunehag2017value}
P.~Sunehag, G.~Lever, A.~Gruslys, W.~M. Czarnecki, V.~Zambaldi, M.~Jaderberg,
  M.~Lanctot, N.~Sonnerat, J.~Z. Leibo, K.~Tuyls, et~al.
\newblock Value-decomposition networks for cooperative multi-agent learning.
\newblock {\em arXiv preprint arXiv:1706.05296}, 2017.

\bibitem{sutton1984temporal}
R.~S. Sutton.
\newblock {\em Temporal credit assignment in reinforcement learning}.
\newblock PhD thesis, University of Massachusetts Amherst, 1984.

\bibitem{sutton2018reinforcement}
R.~S. Sutton and A.~G. Barto.
\newblock {\em Reinforcement Learning: An Introduction}.
\newblock MIT press, 2018.

\bibitem{sutton1999between}
R.~S. Sutton, D.~Precup, and S.~Singh.
\newblock Between mdps and semi-mdps: A framework for temporal abstraction in
  reinforcement learning.
\newblock {\em Artificial intelligence}, 112(1-2):181--211, 1999.

\bibitem{tampuu2017multiagent}
A.~Tampuu, T.~Matiisen, D.~Kodelja, I.~Kuzovkin, K.~Korjus, J.~Aru, J.~Aru, and
  R.~Vicente.
\newblock Multiagent cooperation and competition with deep reinforcement
  learning.
\newblock {\em PLoS ONE}, 12(4), 2017.

\bibitem{tang2017exploration}
H.~Tang, R.~Houthooft, D.~Foote, A.~Stooke, O.~Xi~Chen, Y.~Duan, J.~Schulman,
  F.~DeTurck, and P.~Abbeel.
\newblock \# exploration: A study of count-based exploration for deep
  reinforcement learning.
\newblock {\em Advances in neural information processing systems}, 30, 2017.

\bibitem{tulving1985memory}
E.~Tulving.
\newblock Memory and consciousness.
\newblock {\em Canadian Psychology/Psychologie Canadienne}, 26(1):1, 1985.

\bibitem{van2021expected}
H.~van Hasselt, S.~Madjiheurem, M.~Hessel, D.~Silver, A.~Barreto, and D.~Borsa.
\newblock Expected eligibility traces.
\newblock In {\em Proceedings of the AAAI Conference on Artificial
  Intelligence}, number~11, pages 9997--10005, 2021.

\bibitem{vaswani2017attention}
A.~Vaswani, N.~Shazeer, N.~Parmar, J.~Uszkoreit, L.~Jones, A.~N. Gomez,
  {\L}.~Kaiser, and I.~Polosukhin.
\newblock Attention is all you need.
\newblock In {\em Advances in Neural Information Processing Systems}, pages
  5998--6008, 2017.

\bibitem{vinyals2019grandmaster}
O.~Vinyals, I.~Babuschkin, W.~M. Czarnecki, M.~Mathieu, A.~Dudzik, J.~Chung,
  D.~H. Choi, R.~Powell, T.~Ewalds, P.~Georgiev, et~al.
\newblock Grandmaster level in {StarCraft II} using multi-agent reinforcement
  learning.
\newblock {\em Nature}, 575(7782):350--354, 2019.

\bibitem{walsh2009learning}
T.~J. Walsh, A.~Nouri, L.~Li, and M.~L. Littman.
\newblock Learning and planning in environments with delayed feedback.
\newblock {\em Autonomous Agents and Multi-Agent Systems}, 18(1):83--105, 2009.

\bibitem{wang2020qplex}
J.~Wang, Z.~Ren, T.~Liu, Y.~Yu, and C.~Zhang.
\newblock {QPLEX}: Duplex dueling multi-agent q-learning.
\newblock {\em arXiv preprint arXiv:2008.01062}, 2020.

\bibitem{wang2020learning}
R.~Wang, X.~He, R.~Yu, W.~Qiu, B.~An, and Z.~Rabinovich.
\newblock Learning efficient multi-agent communication: An information
  bottleneck approach.
\newblock In {\em International Conference on Machine Learning}, pages
  9908--9918. PMLR, 2020.

\bibitem{wang2019action}
W.~Wang, T.~Yang, Y.~Liu, J.~Hao, X.~Hao, Y.~Hu, Y.~Chen, C.~Fan, and Y.~Gao.
\newblock Action semantics network: Considering the effects of actions in
  multiagent systems.
\newblock {\em arXiv preprint arXiv:1907.11461}, 2019.

\bibitem{wang2021dop}
Y.~Wang, B.~Han, T.~Wang, H.~Dong, and C.~Zhang.
\newblock {DOP}: Off-policy multi-agent decomposed policy gradients.
\newblock In {\em International Conference on Learning Representations}, 2021.

\bibitem{wang2016dueling}
Z.~Wang, T.~Schaul, M.~Hessel, H.~Hasselt, M.~Lanctot, and N.~Freitas.
\newblock Dueling network architectures for deep reinforcement learning.
\newblock In {\em International Conference on Machine Learning}, pages
  1995--2003, 2016.

\bibitem{watkins1992q}
C.~J. Watkins and P.~Dayan.
\newblock Q-{L}earning.
\newblock {\em Machine Learning}, 8(3-4):279--292, 1992.

\bibitem{xiao2019thinking}
T.~Xiao, E.~Jang, D.~Kalashnikov, S.~Levine, J.~Ibarz, K.~Hausman, and
  A.~Herzog.
\newblock Thinking while moving: Deep reinforcement learning with concurrent
  control.
\newblock In {\em International Conference on Learning Representations}, 2019.

\bibitem{xiao2020macro}
Y.~Xiao, J.~Hoffman, and C.~Amato.
\newblock Macro-action-based deep multi-agent reinforcement learning.
\newblock In {\em Conference on Robot Learning}, pages 1146--1161. PMLR, 2020.

\bibitem{xiao2020multi}
Y.~Xiao, J.~Hoffman, T.~Xia, and C.~Amato.
\newblock Multi-agent/robot deep reinforcement learning with macro-actions
  (student abstract).
\newblock In {\em Proceedings of the AAAI Conference on Artificial
  Intelligence}, volume~34, pages 13965--13966, 2020.

\bibitem{yuan2022asynchronous}
Y.~Yuan and R.~Mahmood.
\newblock Asynchronous reinforcement learning for real-time control of physical
  robots.
\newblock {\em arXiv preprint arXiv:2203.12759}, 2022.

\bibitem{zahavy2021reward}
T.~Zahavy, B.~O'Donoghue, G.~Desjardins, and S.~Singh.
\newblock Reward is enough for convex mdps.
\newblock {\em Advances in Neural Information Processing Systems}, 34, 2021.

\bibitem{zheng2021EMC}
L.~Zheng, J.~Chen, J.~Wang, J.~He, Y.~Hu, Y.~Chen, C.~Fan, Y.~Gao, and
  C.~Zhang.
\newblock Episodic multi-agent reinforcement learning with curiosity-driven
  exploration.
\newblock {\em Advances in Neural Information Processing Systems}, 34, 2021.

\bibitem{zhou2020smarts}
M.~Zhou, J.~Luo, J.~Villella, Y.~Yang, D.~Rusu, J.~Miao, W.~Zhang, M.~Alban,
  I.~Fadakar, Z.~Chen, et~al.
\newblock Smarts: Scalable multi-agent reinforcement learning training school
  for autonomous driving.
\newblock {\em arXiv preprint arXiv:2010.09776}, 2020.

\bibitem{zhou2019memory}
Y.~Zhou, D.~E. Asher, N.~R. Waytowich, and J.~A. Shah.
\newblock On memory mechanism in multi-agent reinforcement learning.
\newblock {\em arXiv e-prints}, pages arXiv--1909, 2019.

\bibitem{Zhu2020Episodic}
G.~Zhu, Z.~Lin, G.~Yang, and C.~Zhang.
\newblock Episodic reinforcement learning with associative memory.
\newblock In {\em International Conference on Learning Representations}, 2020.

\end{thebibliography}
\bibliographystyle{abbrv}

\newpage

\appendix
\newpage
\section{Proofs}\label{appendix:proofs}

\PropContract*
\begin{proof}
Recall that the \AsyncActionUpperCase Bellman operator $\Gamma$ is defined as:
\begin{small}
\begin{equation}
    (\Gamma Q^{\mathrm{tot}})(\boldsymbol{s}, \thicktilde{\boldsymbol{u}}) := \mathbb{E}[ \Pi_{\Phi}R(\boldsymbol{s}, \thicktilde{\boldsymbol{u}}, \boldsymbol{m})  + \gamma \max_{\thicktilde{\boldsymbol{u}}^{\prime}} Q^{\mathrm{tot}}(\boldsymbol{s}^{\prime}, \thicktilde{\boldsymbol{u}}^{\prime}) ]
\end{equation}
\end{small}\noindent
The sup-norm is defined as $\left\lVert Q \right\lVert_{\infty}=\sup_{\boldsymbol{s} \in \mathcal{S}, \thicktilde{\boldsymbol{u}} \in \mathcal{U}} \left\vert Q(\boldsymbol{s}, \thicktilde{\boldsymbol{u}}) \right\vert$. We consider the sup-norm contraction:
\begin{small}
\begin{equation}
    \left\lVert (\Gamma Q^{\mathrm{tot}}_{(1)})(\boldsymbol{s}, \thicktilde{\boldsymbol{u}}) - (\Gamma Q^{\mathrm{tot}}_{(2)})(\boldsymbol{s}, \thicktilde{\boldsymbol{u}}) \right\lVert_{\infty} \leq \gamma \left\lVert Q^{\mathrm{tot}}_{(1)}(\boldsymbol{s}, \thicktilde{\boldsymbol{u}}) - Q^{\mathrm{tot}}_{(2)}(\boldsymbol{s}, \thicktilde{\boldsymbol{u}}) \right\lVert_{\infty}
\end{equation}
\end{small}
We prove:
\begin{equation}
\begin{aligned}
    \left\lVert (\Gamma Q^{\mathrm{tot}}_{(1)})(\boldsymbol{s}, \thicktilde{\boldsymbol{u}}) - (\Gamma Q^{\mathrm{tot}}_{(2)})(\boldsymbol{s}, \thicktilde{\boldsymbol{u}}) \right\lVert_{\infty} 
    &= \max_{\boldsymbol{s}, \thicktilde{\boldsymbol{u}}}\left\vert\gamma \sum\nolimits_{\boldsymbol{s}^{\prime}}\mathcal{P}(\boldsymbol{s}^{\prime}\vert\boldsymbol{s},\thicktilde{\boldsymbol{u}})(\max_{\thicktilde{\boldsymbol{u}}^{\prime}} Q_{(1)}^{\mathrm{tot}}(\boldsymbol{s}^{\prime}, \thicktilde{\boldsymbol{u}}^{\prime}) - \max_{\thicktilde{\boldsymbol{u}}^{\prime}}Q_{(2)}^{\mathrm{tot}}(\boldsymbol{s}^{\prime}, \thicktilde{\boldsymbol{u}}^{\prime})) 
    \right\vert \\
    &\leq \max_{\boldsymbol{s}, \thicktilde{\boldsymbol{u}}} \gamma  \sum\nolimits_{\boldsymbol{s}^{\prime}}\mathcal{P}(\boldsymbol{s}^{\prime}\vert\boldsymbol{s},\thicktilde{\boldsymbol{u}}) \left\vert\max_{\thicktilde{\boldsymbol{u}}^{\prime}}(Q_{(1)}^{\mathrm{tot}}(\boldsymbol{s}^{\prime}, \thicktilde{\boldsymbol{u}}^{\prime}) -Q_{(2)}^{\mathrm{tot}}(\boldsymbol{s}^{\prime}, \thicktilde{\boldsymbol{u}}^{\prime}))  \right\vert \\
    &\leq \max_{\boldsymbol{s}, \thicktilde{\boldsymbol{u}}} \gamma  \sum\nolimits_{\boldsymbol{s}^{\prime}}\mathcal{P}(\boldsymbol{s}^{\prime}\vert\boldsymbol{s},\thicktilde{\boldsymbol{u}}) \max_{\boldsymbol{s}^{\prime\prime}, \thicktilde{\boldsymbol{u}}^{\prime}}\left\vert Q_{(1)}^{\mathrm{tot}}(\boldsymbol{s}^{\prime\prime}, \thicktilde{\boldsymbol{u}}^{\prime}) -Q_{(2)}^{\mathrm{tot}}(\boldsymbol{s}^{\prime\prime}, \thicktilde{\boldsymbol{u}}^{\prime}) \right\vert \\
    &= \max_{\boldsymbol{s}, \thicktilde{\boldsymbol{u}}} \gamma  \sum\nolimits_{\boldsymbol{s}^{\prime}}\mathcal{P}(\boldsymbol{s}^{\prime}\vert\boldsymbol{s},\thicktilde{\boldsymbol{u}}) \left\lVert Q_{(1)}^{\mathrm{tot}} -Q_{(2)}^{\mathrm{tot}} \right\lVert_{\infty}  \\
    &= \gamma \left\lVert Q^{\mathrm{tot}}_{(1)} - Q^{\mathrm{tot}}_{(2)} \right\lVert_{\infty}
\end{aligned}
\end{equation}

\end{proof}

Readers may find that with the reward redistribution operator $\Pi_{\Phi}$, the reward is ordered. Consequently, \AsyncActionUpperCase Bellman equation is reduced to Bellman equation\footnote{When all \AsyncAction rewards are redistributed to the ground true pivot timestep, we can claim this finding.}.

\newpage
\section{\OurMethod~for \AsyncActionUpperCase MARL}\label{appendix:ourmethod_and_marl}

In this section, we list Alg.~\ref{alg:create_mask}, Alg.~\ref{alg:ul}, Alg.~\ref{alg:lu} and Alg.~\ref{alg:summarize} in Sec. \ref{appendix:ourmethod} and present the training pipline for \AsyncActionUpperCase MARL in Alg.~\ref{alg:triem_algo} in Sec.~\ref{appendix_sec:marl_training}. We also present lists of symbols for Dec-POMDP, \AsyncActionUpperCase Dec-POMDP, MARL, \AsyncActionUpperCase MARL and \OurMethod in Tab.~\ref{tb:appendix_notation_decpomdp},~\ref{tb:appendix_notation_marl} and~\ref{tb:appendix_notation_ourmethod}. To make pesude code easy to read, we use Python-like\footnote{\url{https://www.python.org/}} syntax to represent vectors and hashmaps (look-up tables).

\subsection{\OurMethod}\label{appendix:ourmethod}

We also define the sub-graph set of $\phi^{t}_{i}$ as $\Phi^{t,\Omega}_{i}=\{\phi^{t,\omega}_{i}\}^{\Omega-1}_{\omega=0}$ by using the discretized episode return and there are $\Omega$ sub-graphs. $\phi^{t,\omega}_{i}$ is the $\omega$-th sub-graph whose episode return is $\Upsilon \texttt{[} \omega \texttt{]}$ ($\omega \in \{0, \cdots, \Omega-1\}, \Upsilon=[0, \cdots, \boldsymbol{r}^{t,i}]$) where $\boldsymbol{r}^{t,i}$ is the discretized maximum episode return of $\phi^{t}_{i}$.

\begin{algorithm2e}[H]
\textbf{Input:} Agent $i$'s  $\{\tau^{d}_i\}^{D}_{d=1}$ and $\Phi_i$.\\
\For{$d\leftarrow1$ {\bfseries to} $D$} { 
    Get $\phi^{l}_i \leftarrow \Phi_i \texttt{[length(}\tau^{d}_i\texttt{)-1]}$; \tcp{$\texttt{length(}\tau^{d}_i\texttt{)-1}$ equals $l$}
    Calculate the discretized episode reward $\boldsymbol{r}^{l,i}$;\\
    Get the index $\omega$ from $\Upsilon$ by using $\boldsymbol{r}^{l,i}$; \\
    Get $\phi^{l,\omega}_i \leftarrow \Phi^{l,\Omega}_i \texttt{[}\omega\texttt{]}$; \\
    \For{$t\leftarrow0$ {\bfseries to} $\operatorname{length}(\tau^{d}_i)-1$} {
        \eIf{$(o^{t}_{i}$, $u^{t}_{i}) \in \phi^{l}_i$} {
        
            \tcp{There is no need to update the node of sub-graph $\phi^{l,\omega}_i$ as it shares the same node with $\phi^{l}_i$}
        
            $\psi \leftarrow \phi^{l}_i\texttt{.getNode(}o^{t}_{i}$, $u^{t}_{i}\texttt{)}$; \\
            $\psi\texttt{.visitCount++}$; \\
        } {
            $\psi\leftarrow \texttt{newNode(}o^{t}_{i}, u^{t}_{i}, r^{t}\texttt{)}$; \\
            $\phi^{l}_i\texttt{.append(}\psi\texttt{)}$; \\
            $\phi^{l}_i\texttt{.updatePointers(}\psi\texttt{)}$; \tcp{Sub-graph $\phi^{l,\omega}_i$ shares the same node with $\phi^{l}_i$}
            $\phi^{l,\omega}_i\texttt{.append(}\psi\texttt{)}$; \\ 
            $\phi^{l,\omega}_i\texttt{.updatePointers(}\psi\texttt{)}$;
            
        }
    }
}
\textbf{Return:} $\Phi_i$.
\caption{\texttt{Update\OurMethod}}\label{alg:update_graph}
\end{algorithm2e}

Alg.~\ref{alg:update_graph} shows the whole procedure to construct the graph. To illustrate it, we provide an example below (Fig.~\ref{fig:construct_memory}) to show how to construct the graph. Fig.~\ref{fig:subgraphs} shows the relationship between sub-graphs and the graphs.

\begin{figure}[ht]
    \centering
    \includegraphics[scale=0.45]{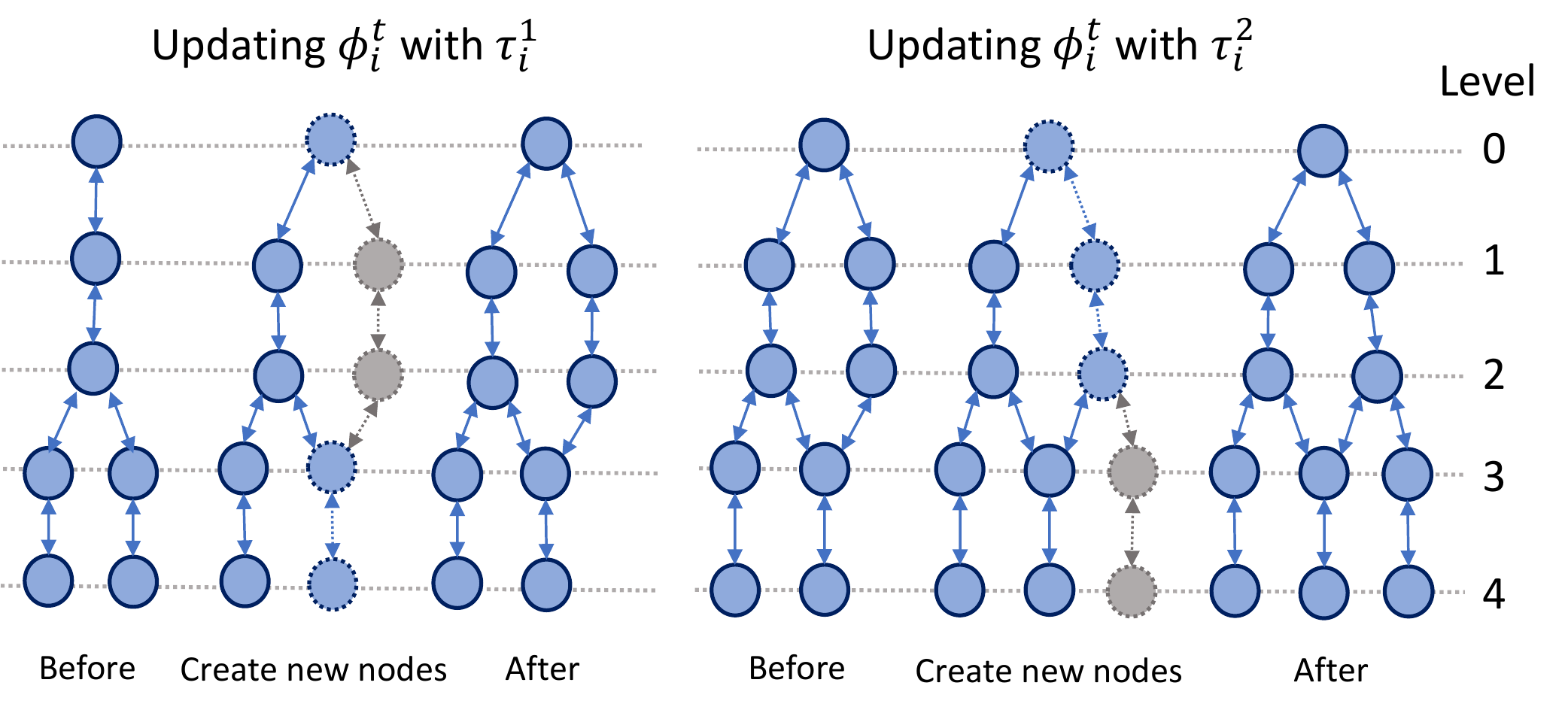}
    \caption{\textbf{Updating agent $i$'s $\Phi_i$:} Agent $i$'s $\phi^{t}_{i}$ is updated with agent's trajectories $\tau^{1}_{i}$ and then updated with $\tau^{2}_{i}$. Solid arrows and circles indicate the pointers and nodes, respectively. Grey dotted lines indicate pointers to be created and grey circles with dotted outlines indicate nodes to be created. All the dotted elements (pointers and circles) consist of the a new path in $\tau^{1}_{i}$. All the created pointers and nodes will be added to $\phi^{t}_{i}$.}
    \label{fig:construct_memory}
\end{figure}

\newpage

\begin{algorithm2e}[H]
\textbf{Input:} $\Lambda^{t,l}_{i}$.\\
\textbf{Initialize:} $\text{vc} \leftarrow \texttt{[]}$, an empty vector to store visit count mask for each path in $\Lambda^{t,l}_{i}$. \\
\ForEach{\text{path} $\Lambda^{t,l}_{i}\normalfont{\texttt{[}j\normalfont{\texttt{]}}} \in \Lambda^{t,l}_{i}$}{%

    $\text{pathVC} \leftarrow \texttt{sort(}\texttt{set([}\texttt{node.visitCount}~\text{for}~\texttt{node}~\text{in}~\Lambda^{t,l}_{i}\texttt{[}j\texttt{]]))}$; \\
    Create an empty look-up table $\text{tb} \leftarrow \texttt{\{\}}$;\\
    \ForEach{\text{index} $k \in \normalfont{\text{pathVC}}$}{
        $\text{tb}\texttt{[}\text{pathVC}\texttt{[}k\texttt{]}\texttt{]} \leftarrow k$; \\
    }
    Create an empty vector $\text{ls} \leftarrow \texttt{[]}$;\\
    \ForEach{\text{node} $\in \text{trajectory}~\Lambda^{t,l}_{i}\normalfont{\texttt{[}j\normalfont{\texttt{]}}}$}{
        $\text{ls}\texttt{.append(} \text{tb}\texttt{[}\text{node}\texttt{.visitCount}\texttt{]} \texttt{)}$;\\
    }
    $\text{vc}\texttt{.append(}\text{ls}\texttt{)}$;
}
\textbf{Return:} $\text{vc}$.
\caption{\texttt{VisitCount}}\label{alg:create_mask}
\end{algorithm2e}

\newpage

\begin{algorithm2e}[H]
\textbf{Input:} $\Lambda^{t,l}_{i}\texttt{[}j\texttt{]}, \text{vc}, \tau_i$.\\
\textbf{Initialize:} $e^{i,j, \downarrow}_{t}\leftarrow-1$, $\text{res}\leftarrow\texttt{[]}$. \tcp{Initialize the return value and an empty vector}
$\text{vc}\leftarrow \text{vc}\texttt{[:}t\texttt{]}$; \tcp{Slicing the visit count}
$\text{left} \leftarrow 0$, $\text{right} \leftarrow 1$; \tcp{Create two pointers}
\While{$\normalfont{\text{right} < \texttt{size(} \text{vc} \texttt{)}}$}{
    \uIf{$ \normalfont{\text{vc}\texttt{[}\text{right} \texttt{]} = \text{vc}\texttt{[}\text{left} \texttt{]} } $}{
        $\text{right}\texttt{++}$; \tcp{Non-decreasing pattern}
    }
    \uElseIf{$\normalfont{ \text{vc}\texttt{[}\text{right} \texttt{]} > \text{vc}\texttt{[}\text{left} \texttt{]}  }$}{
        $\text{res}\texttt{.append(}\text{right} \texttt{)}$; \\
        $\text{left} \leftarrow \text{right}$;\\
        $\text{right} \texttt{++}$; \\
    }
    \Else{
        \texttt{break}; \tcp{Break the while loop}
    }
}
\uIf{$\normalfont{\texttt{size(} \text{res} \texttt{)} = 0}$}{
    $e^{i,j, \downarrow}_{t} \leftarrow -1$; \\
}
\Else{
    $e^{i,j, \downarrow}_{t} \leftarrow \text{res} \texttt{[} \texttt{-1]} $; \tcp{Get the last value of $\text{res}$}
}
\textbf{Return:} $e^{i,j, \downarrow}_{t}$.
\caption{\texttt{UL}}\label{alg:ul}
\end{algorithm2e}

\newpage

\begin{algorithm2e}[H]
\textbf{Input:} $\Lambda^{t,l}_{i}\texttt{[}j\texttt{]}, \text{vc}, \tau_i$.\\
\textbf{Initialize:} $e^{i,j, \uparrow}_{t}\leftarrow-1$, $\text{res}\leftarrow\texttt{[]}$. \tcp{Initialize the return value and an empty vector}
$\text{vc}\leftarrow\texttt{reverse(}\text{vc}\texttt{[:}t \texttt{])}$; \tcp{Slicing the visit count and then reverse}
$\text{left} \leftarrow 0$, $\text{right} \leftarrow 1$; \tcp{Create two pointers}
\While{$\normalfont{\text{right} < \texttt{size(} \text{vc} \texttt{)}}$}{
    \uIf{$ \normalfont{\text{vc}\texttt{[}\text{right} \texttt{]} = \text{vc}\texttt{[}\text{left} \texttt{]} } $}{
        $\text{right}\texttt{++}$; \tcp{Non-increasing pattern}
    }
    \uElseIf{$\normalfont{ \text{vc}\texttt{[}\text{right} \texttt{]} > \text{vc}\texttt{[}\text{left} \texttt{]}  }$}{
        $\text{res}\texttt{.append(}\text{right} \texttt{)}$; \\
        $\text{left} \leftarrow \text{right}$;\\
        $\text{right} \texttt{++}$; \\
    }
    \Else{
        \texttt{break}; \tcp{Break the while loop}
    }
}
\uIf{$\normalfont{\texttt{size(} \text{res} \texttt{)} = 0}$}{
    $e^{i,j, \uparrow}_{t} \leftarrow -1$; \\
}
\Else{
    $e^{i,j, \uparrow}_{t} \leftarrow \text{res} \texttt{[} \texttt{-1]} $; \tcp{Get the last value of $\text{res}$}
    $e^{i,j, \uparrow}_{t} \leftarrow \texttt{size(}\text{vc} \texttt{)} - e^{i,j, \uparrow}_{t} - 1$; \tcp{Get the right timestep as $\text{vc}$ is reversed}
}
\textbf{Return:} $e^{i,j,\uparrow}_{t}$.
\caption{\texttt{LU}}\label{alg:lu}
\end{algorithm2e}

\newpage

\begin{algorithm2e}[H]
\textbf{Input:} $\boldsymbol{e}^{i}_t$. \tcp{Receives a vector of pivot timesteps}
\tcp{Get the pivot timestep with the maximum count}
\tcp{Since $\texttt{getValueOfMaxCount(} \cdot \texttt{)}$ is easy to implement, for brevity, we do not present its implementation here}
$e^{i}_t \leftarrow \texttt{getValueOfMaxCount(} \boldsymbol{e}^{i}_t \texttt{)}$; \\
\textbf{Return:} $e^{i}_t$.
\caption{\texttt{Summarize}}\label{alg:summarize}
\end{algorithm2e}

\newpage

\begin{algorithm2e}[H]
\textbf{Input:} $\omega$, $\Lambda^{t,l}_{i}$, $\tau_i$, $\boldsymbol{r}^{l,i}$, $\Upsilon$ and $\Phi_i$.\\
\textbf{Initialize:} $e^{i}_t \leftarrow -1$; \tcp{Initialize the results}
\textbf{Initialize:} $\text{tb} \leftarrow \texttt{\{\}}$; \tcp{Initialize a empty look-up table}
\ForEach{\text{path} $\normalfont{\Lambda^{t,l}_{i}\texttt{[}j\texttt{]} \in \Lambda^{t,l}_{i}}$}{%
    \tcp{Search backwards from $t-1$ to $0$}
    \ForEach{$\normalfont{\texttt{node}~j=t-1 \in \Lambda^{t,l}_{i}\texttt{[}j\texttt{]}~\text{to}~0}$} { 
        $\text{tb}\texttt{[}j\texttt{]}\leftarrow \texttt{\{\}}$; \\
        \uIf{$\normalfont{new~\texttt{node}}$}{
            $\text{tb}\texttt{[}j\texttt{][node]=1}$;
        }
        \Else{
            $\text{tb}\texttt{[}j\texttt{][node]++}$;
        }
    }
}
\uIf{\normalfont{the size of each $\text{tb}\texttt{[}j\texttt{]}$ equals the size of $\Lambda^{t,l}_{i}$}}{
    $e^{i}_t \leftarrow -1$;\\
}
\Else{
    Get the timestep $j^{\prime}$ whose length of $\text{tb}\texttt{[}j^{\prime}\texttt{]}$ is the smallest one;\\
    $e^{i}_t \leftarrow j^{\prime}$;\\
}
\textbf{Return:} $e^{i}_t$.
\caption{Search Scheme II}\label{alg:search_scheme_II}
\end{algorithm2e}

\newpage
\subsection{\AsyncActionUpperCase MARL Training}\label{appendix_sec:marl_training}
We present the pseudo code of incorporating \OurMethod~into model-free MARL method in Alg. \ref{alg:triem_algo}. 
Lines \ref{alg:commit_actions}-\ref{alg:save_trajs} show that agents commit actions into the environment and then agents save the trajectories into the buffer. 
Agents update their individual episodic memory in line \ref{alg:call_triem}. In lines \ref{alg:start_to_train}-\ref{alg:end_update}, agents' policies are trained with TD learning (line \ref{alg:cal_td}) by searching agent's episodic memories (line \ref{alg:search_triem}). We also provide a pictorial view of our framework in Fig. \ref{fig:ctde_framework} to show the whole pipeline of MARL learning.

\begin{algorithm2e}[ht]
{
  \textbf{Input}: initialize parameters $\bar{\theta}$ and $\theta$ of the network and the target network of agents, replay buffer $\mathcal{D}$ and $\Phi$;\\
  \For{$j\leftarrow1$ {\bfseries to} max\_episode} {
    \While{\normalfont{episode\_not\_terminated}} {
        All agents commit actions $\thicktilde{\boldsymbol{u}}^{t}_{i}$ into environment; \label{alg:commit_actions} \\
        Collect $(\boldsymbol{s}^{t}, \{o^{t}_{i}\}^{N}_{i=1}, \thicktilde{\boldsymbol{u}}^{t}, r^{t}, \boldsymbol{s}^{\prime})$; save it into  $\mathcal{D}$; \label{alg:save_trajs} \\
        Call \texttt{Update\OurMethod} (Alg. \ref{alg:update_graph});  \label{alg:call_triem} \\
        \If{\normalfont{update\_the\_model}} { \label{alg:start_to_train}
            Sample a min-batch $\mathcal{D}^{\prime}$ from $\mathcal{D}$;\\
            For each sample in $\mathcal{D}^{\prime}$, get $e_t$ by calling \texttt{SearchPivotTimesteps} (Alg. \ref{alg:search_triem_algo}) ; \label{alg:search_triem}\\
            Calculate the TD target with Eqn. \ref{eq:rew_redist} and \ref{eq:new_td_loss_dqn}; \label{alg:cal_td} \\
            Update $\theta$ by minimizing the TD loss;\\
            \If{\normalfont{update}}{
                Update $\bar{\theta}$: $\bar{\theta} \leftarrow \theta$; \label{alg:end_update} \\
            }
        }
    }
  }
  \textbf{Return}: A well-trained policy for each agent.
  \caption{\AsyncActionUpperCase MARL Training} \label{alg:triem_algo}
}
\end{algorithm2e}

\subsection{List of Symbols}

\begin{table}[h]
    \centering
    \caption{List of Symbols for Dec-POMDP and \AsyncActionUpperCase Dec-POMDP}\label{tb:appendix_notation_decpomdp}
    \begin{tabular}{{l}{l}}
    \hline
    Symbol & Meaning \\
    \hline
    $\mathcal{S}$ & The state space \\
    $\mathcal{U}$ & The action space  \\
    $\mathcal{P}$ & The transition probability \\
    $R$ & The reward function \\
    $\mathcal{R}$ & The reward space \\
    $O$ & The observation function \\
    $\mathcal{O}$ & The observation space \\
    $\mathcal{N}$ & The index set agents \\
    $\bm{s}$ & The current global state \\
    $\bm{u}$ & The current action \\
    $\bm{s}'$ &  The next global state \\
    $\gamma$ &  The discount factor \\
    $i$ & The index of agent $i$ \\ 
    $u_i$ & The action of agent $i$ at current timestep \\ 
    $N$ & The number of agents \\
    $\mathcal{T}$ & The agent's action-observation-reward history space \\
    $\tau_i$ & Agent's action-observation-reward history \\
    \hline
    $\thicktilde{u}$ & Agent's \AsyncAction action \\
    $m_{\thicktilde{u}_i}$ & The execution duration of agent $i$'s action $\thicktilde{u}_i$ \\
    $A$ & The action duration distribution \\
    $\mathcal{A}$ & The space of the action duration distribution  \\
    $\bm{\thicktilde{u}}$ & The joint \AsyncAction action \\
    $\bm{\thicktilde{u}}_t$ & The joint \AsyncAction action at timestep $t$ \\
    $\bm{m}$ & The execution duration of $\bm{\thicktilde{u}}$ \\
    $\bm{m}_t$ & The execution duration of $\bm{\thicktilde{u}}_t$ \\
    \hline
    \end{tabular}
\end{table}

\newpage
\begin{table}[h]
    \centering
    \caption{List of Symbols for MARL and \AsyncActionUpperCase MARL}\label{tb:appendix_notation_marl}
    \begin{tabular}{{l}{l}}
    \hline
    Symbol & Meaning \\
    \hline
    $Q_i$ & Agent $i$'s Q value \\
    $Q^{\operatorname{tot}}$ & The global Q value of all agents \\
    $Q^{*}$ & The optimal global Q value of all agents \\
    $\theta$ & The the parameters of the agents (including agent's network, and networks for learning $Q^{\operatorname{tot}}$) \\
    $\bar{\theta}$ & The the parameter of the target network \\
    $D^{\prime}$ & A sample from the replay buffer \\
    $\mathcal{D}$ & The replay buffer \\
    \hline
    $e_t$ & The pivot timestep for $r_t$ \\
    $\Gamma$ & The \AsyncActionUpperCase Bellman operator  \\
    $\hat{r}^{e_t}$ & The redistributed reward \\
    $\Pi_{\Phi}$ & The reward redistribution operator \\
    \hline
    \end{tabular}
\end{table}

\begin{table}[h]
    \centering
    \caption{List of Symbols for \OurMethod}\label{tb:appendix_notation_ourmethod}
    \begin{tabular}{{l}{l}}
    \hline
    Symbol & Meaning \\
    \hline
    $\tau_i$ & Agent $i$'s observation-action-reward trajectory \\
    $\o^{t}_i$ & Agent $i$'s observation at timesetp $t$ \\
    $\thicktilde{u}^{t}_i$ & Agent $i$'s \AsyncAction action at timesetp $t$ \\
    $r^{t}$ & The global reward at timestep $t$ \\
    $T$ & The length of the $\tau_i$ \\
    $\Phi$ & The set of \OurMethod \\
    $\Phi_{i}$ & The set of agent $i$'s \OurMethod \\
    $\phi^{t}_{i}$ & Agent $i$'s \OurMethod whose maximum level is $t$  \\
    $\Psi$ & The set of nodes \\
    $\Xi$ & The set of edges \\
    $\omega$ & The index of episode return \\
    $\Omega$ & The length of the episode return list $\Upsilon$\\
    $\phi^{t,\omega}_{i}$ & Agent $i$'s $\omega$-th sub-graph of which maximum level is $t$ \\
    $\Phi^{t,\Omega}_{i}$ & Agent $i$'s set of sub-graphs of which maximum level is $t$ \\
    $\Upsilon$ & The episode return list \\
    $\Upsilon \texttt{[} \omega \texttt{]}$ & The $\omega$-th episode return \\
    $\boldsymbol{r}^{t,i}$ & The discretized maximum episode return of $\phi^{t}_{i}$\\
    $l$ & the index of the $l$-th level in the graph. \\
    $\Lambda^{t}_{i}$ & All the paths from node at level $t$ to node at the level $0$ \\
    $e_t$ & The pivot timestep for $r_t$ \\
    $e^{i}_t$ & Agent $i$'s pivot timestep for $r_t$ \\
    $\boldsymbol{e}^{i}_t$ & The vector of agent $i$'s pivot timestep of each path in $\Lambda^{t,l}_{i}$ for $r_t$  \\
    \hline
    \end{tabular}
\end{table}

\newpage
\section{Environments}\label{appendix:environments}

\begin{figure}[ht]
    \centering
    \includegraphics[scale=0.34]{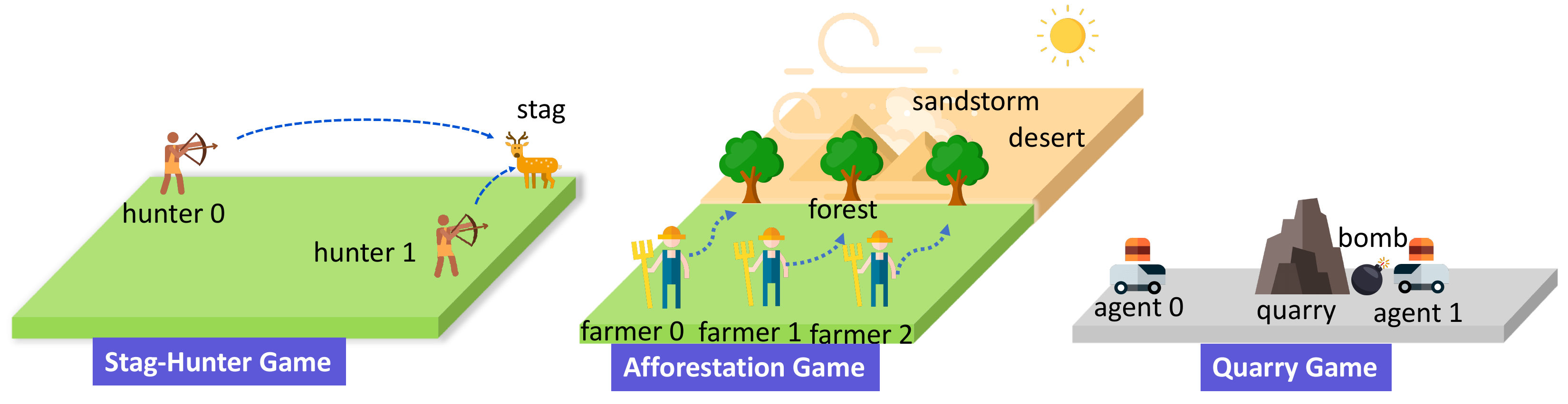}
    \caption{Stag-hunter Game, Quarry Game and Afforestation Game.}
    \label{fig:3_games_appendix}
\end{figure}

\subsection{Stag-Hunter Game} \label{appendix:stag_hunter}
As depicted in Fig. \ref{fig:3_games_appendix}, there are $n$ agents whose action durations are different; the task is to catch the stag by shooting it simultaneously for all agents. Agents cannot move and the distance between the agent and the stag is different. The stag will escape when hit by $j \in \{1, \cdots, n-1\}$ arrows. In this case, agents will receive a positive reward given the number of arrows that successfully shoot the stag. In Sec. \ref{sec:experiments}, the environment dimension of the Stag-Hunter Game is $15\times15$ and maximum time steps is 14. At each time step, each agent can only observe its position and the position of the stag. It cannot observe the position of other agents. Agents can select \texttt{SHOOT} or \texttt{NOOP} actions. \texttt{SHOOT} means shooting the arrow and \texttt{NOOP} means no actions to be executed. For agent $0$, the action duration of its \texttt{SHOOT} action is $14$ while the action duration of agent $1$ \texttt{SHOOT} action is $6$. When agent $1$ shoots the arrow at timestep $0$, all agents will receive a positive reward at the end of the episode, making it challenging for TD learning to calculate the exact contribution of each agent. 

\subsection{Quarry Game} \label{appendix:quarry}
There are $n$ agents in a quarry, as shown in Fig. \ref{fig:3_games_appendix}. Agents' task is to complete the $n$-explosive installation task, and only when all the explosives detonate will agents receive the optimal positive reward. After the installation, agents should go to the safe zones. Otherwise, agents will die and receive a negative reward when the explosive detonates. Agents will receive a medium-level reward given the number of detonated explosives. The explosive has different period to detonate after the installation.  At each time step, each agent can observe its position, the position of the quarry, the position of the explosive set by the agent (if any) and the time seconds left for the explosive set by the agent (if any).  The agent can select \texttt{MOVE\_LEFT}, \texttt{MOVE\_RIGHT}, \texttt{NOOP} or \texttt{INSTALL\_EXPLOSIVE} actions at each time step. Note that each agent cannot observe the status of the other agents and others’ explosives. Episode ends after the maximum timesteps or the explosive detonates. To complete the task, agents should place the explosive at the right timestep and return to the safe zone.

\subsection{Afforestation Game} \label{appendix:afforestation}
In Fig. \ref{fig:3_games_appendix}, there are $n$ farmer agents in the farm. To the north of the farm, there is a desert. In the early spring, strong sandstorms may gust from the north and destroy the farm. In order to protect the farm, farm agents should plant trees in the north of the farm. Only when trees are tall enough can they protect the farm. Trees can have different durations to grow, making the \AsyncActionNotilde ness of the planting action. Agents receive the optimal positive reward when there are $n$ trees can protect the farm before the sandstorm. Note that agents have partial observations and will receive a reward of $-0.1$ at each timestep in all scenarios shown in Fig. \ref{fig:3_games_appendix}. At each time step, each agent can observe its position, the position of the trees planted by itself and the status (position and the age of the tree) of the sandstorm (if any). The agent cannot observe the other agents’ positions and trees planted by other agents. Agents can take \texttt{MOVE\_NORTH}, \texttt{MOVE\_SOUTH}, \texttt{NOOP} or \texttt{PLANT\_TREE} actions. Similar to Quarry Game and Stag-Hunter, agents should take the right action at the right timestep to complete the task. At the last timestep of the episode, the standstorm will gust and if there are enough trees to protect the farm, the task will be success. Note that agents should return to the safe zone when the standstorm comes. Otherwise, agents will receive a fraction of the punishment. Agents will receive rewards when they fail to return to the safe zone. The reward is proportional to the number of agents who fail to return to the safe zone.

\subsection{SMAC}
We introduce the \AsyncAction action into SMAC~\cite{samvelyan19smac}. To make reasonable changes of the environment, we select the \texttt{ATTACK} actions as \AsyncAction actions for SMAC scenarios. The rest actions, including \texttt{MOVE}, \texttt{NOOP} will be executed immediately into the environment after inference. At each time step, agents get local observations within their field of view, which contains information (relative x, relative y, distance, health, shield, and unit type) about the map within a circular area for both allied and enemy units and makes the environment partially observable for each agent. All features, both in the global state and in individual observations of agents, are normalized by their maximum values. Actions are in the discrete space: move[direction], attack[enemy id], stop and no-op. The no-op action is the only legal action for dead agents. Agents can only move in four directions: north, south, east, or west.

\newpage
\section{Baselines} \label{appendix:baselines}

We introduce the baselines evaluated in the experimental section. All baselines are summarized in Table \ref{tab:baselines_appendix}.

\textbf{IQL}~\cite{tampuu2017multiagent}: IQL is an independent Q-learning method for multi-agent RL. Each agent learns its Q values independent with Q-learning~\cite{watkins1992q}.

\textbf{VDN}~\cite{sunehag2017value}: VDN uses a linear combination of individual Q values to approximate the $Q^{\operatorname{tot}}(\boldsymbol{\tau}, \boldsymbol{u})$ as $Q^{\operatorname{tot}} = \sum_{i=1}^{N} Q_{i}(\tau_{i}, u_{i})$.

\textbf{QMIX}~\cite{rashid2018qmix}:
QMIX introduces the monotonic constraint on the relationship between $Q^{\operatorname{tot}}$ and $Q_i$:
\begin{equation}
\frac{\partial Q^{\operatorname{tot}}(\boldsymbol{\tau}, \boldsymbol{u})}{\partial Q_{i}(\tau_{i}, u_{i})} \geq 0, \forall i \in\{1,2, \ldots, N\} \nonumber
\end{equation}
where $Q^{\operatorname{tot}(\boldsymbol{\tau}, \boldsymbol{u})}=f_{m} (Q_{1}(\tau_{1}, u_{1}), \dots, Q_{N}(\tau_{N}, u_{N}))$ and $f_{m}$ is a mixing network used to approximate the $Q^{\operatorname{tot}}$. 

\textbf{QTRAN}~\cite{son2019qtran}: QTRAN factorize the $Q_{\operatorname{jt}}(\boldsymbol{\tau}, \boldsymbol{u})$ with transformation:
\begin{equation}
\sum_{i=1}^{N} Q_{i}\left(\tau_{i}, u_{i}\right)-Q_{\mathrm{jt}}(\boldsymbol{\tau}, \boldsymbol{u})+V_{\mathrm{jt}}(\boldsymbol{\tau})=\left\{\begin{array}{ll}
0 & \boldsymbol{u}=\overline{\boldsymbol{u}} \\
\geq 0 & \boldsymbol{u} \neq \overline{\boldsymbol{u}}
\end{array}\right.\nonumber
\end{equation}
where $V_{\mathrm{jt}}(\boldsymbol{\tau})=\max _{\boldsymbol{u}} Q_{\mathrm{jt}}(\boldsymbol{\tau}, \boldsymbol{u})-\sum_{i=1}^{N} Q_{i}\left(\tau_{i}, u_{i}\right)$. 

\textbf{QPLEX}\cite{wang2020qplex}: Wang et al.~\cite{wang2020qplex} utilizes the established dueling structure
$Q = V + A$~\cite{wang2016dueling}, advantage function and attention network~\cite{vaswani2017attention} and introduces the following factorization:
\begin{equation}
\begin{array}{l}
Q_{t o t}(\boldsymbol{\tau}, \boldsymbol{u})=V_{t o t}(\boldsymbol{\tau})+A_{t o t}(\boldsymbol{\tau}, \boldsymbol{u}) \\ V_{t o t}(\boldsymbol{\tau})=\max _{\boldsymbol{u}^{\prime}} Q_{t o t}\left(\boldsymbol{\tau}, \boldsymbol{u}^{\prime}\right)\\
Q_{i}\left(\tau_{i}, a_{i}\right)=V_{i}\left(\tau_{i}\right)+A_{i}\left(\tau_{i}, a_{i}\right) \\ V_{i}\left(\tau_{i}\right)=\max _{a_{i}^{\prime}} Q_{i}\left(\tau_{i}, a_{i}^{\prime}\right)
\end{array}\nonumber
\end{equation}

\textbf{EMC}~\cite{zheng2021EMC}: EMC is MARL episodic memory method that utilizes episodic memory of RL~\cite{Zhu2020Episodic,HuHaoGEM2021} in MARL curiosity-driven exploration.


\textbf{$N$-step Return and TD($\lambda$) methods}~\cite{sutton1984temporal,sutton2018reinforcement}: $N$-step Return and TD($\lambda$) are methods for Q value prediction.

\textbf{MAVEN}~\cite{mahajan2019maven}: MAVEN builds a latent space with multual information for multi-agent exploration.

\textbf{RMIX}~\cite{qiu2021rmix}: RMIX aims to learnig risk-sensitive policies for MARL. It replaces the Q value policy with  CVaR~\cite{rockafellar2000optimization} for risk-sensitive policy learning.

\begin{table}[ht]
    \vspace{-0.5cm}
    \caption{Baseline algorithms.}
    \label{tab:baselines_appendix}
    \centering
    \begin{tabular}{crcrcr}
        \toprule
        
        \multicolumn{2}{c}{Categories} &
        \multicolumn{2}{l}{Methods} \\
        \cmidrule(lr){1-2}
        \cmidrule(lr){3-4}
        
        \multicolumn{2}{c}{\multirow{5}{*}{\makecell{MARL Baselines  (\textbf{Q1})}}} &  
        \multicolumn{2}{l}{QMIX\cite{rashid2018qmix}} \\
        \multicolumn{2}{c}{} & \multicolumn{2}{l}{VDN~\cite{sunehag2017value}} \\
        \multicolumn{2}{c}{} & \multicolumn{2}{l}{IQL~\cite{tampuu2017multiagent}} \\
        \multicolumn{2}{c}{} & \multicolumn{2}{l}{QTRAN~\cite{son2019qtran}} \\
        \multicolumn{2}{c}{} & \multicolumn{2}{l}{QPLEX~\cite{wang2020qplex}} \\
        
        \cmidrule(lr){1-2}
        \cmidrule(lr){3-4}

        \multicolumn{2}{c}{\multirow{1}{*}{\makecell{EM (\textbf{Q2})}}} & 
        \multicolumn{2}{l}{EMC~\cite{zheng2021EMC}} \\
        


        \cmidrule(lr){1-2}
        \cmidrule(lr){3-4}

        \multicolumn{2}{c}{\multirow{2}{*}{\makecell{Bootstrap  (\textbf{Q3})}}} & 
        \multicolumn{2}{l}{N-step Return \& } \\
        \multicolumn{2}{c}{} & \multicolumn{2}{l}{$\lambda$-Return~\cite{sutton2018reinforcement}} \\

        \cmidrule(lr){1-2}
        \cmidrule(lr){3-4}

        \multicolumn{2}{c}{\multirow{3}{*}{\makecell{Ex-Risk (\textbf{Q4})}}} & 
        \multicolumn{2}{l}{MAVEN~\cite{mahajan2019maven}} \\
        \multicolumn{2}{c}{} & \multicolumn{2}{l}{EMC~\cite{zheng2021EMC}} \\
        \multicolumn{2}{l}{} & \multicolumn{2}{l}{RMIX~\cite{qiu2021rmix}} \\
        \toprule
    \vspace{-0.8cm}
    \end{tabular}
\end{table}

\newpage
\section{Experiment Settings} \label{appendix:experiments}

We implement our method on PyMARL~\cite{samvelyan19smac} and use 10 random seeds to train each method on all environments. We use opensourced code of baselines publicly by the corresponding authors on Github in all experiments. We use the default settings of PyMARL in our research, including the relay buffer, the mixing network, the training hyperparameters. In order to explore, we use $\epsilon$-greedy with $\epsilon$ annealed linearly from 1.0 to 0.05 over 50K time steps from the start of training and keep it constant for the rest of the training for all methods. The discount factor $\gamma=0.99$ and we follow the default hyper-parameters used in the original papers of all methods in our research. We carry out experiments on NVIDIA A100 Tensor Core GPU and NVIDIA GeForce RTX 3090 24G. We resort to mean-std values as our performance evaluation measurement. We use $\beta=0.00001$ in Eqn.~\ref{eq:rew_redist}. To create sub-graphs in \OurMethod, we first calculate the episode return and keep 1 decimal of it. We then use this episode return to create each sub-graph. We list some important hyper-parameters in Tab.~\ref{tb:hyparam}.

\begin{table*}[ht]
\caption{Hyper-parameters}\label{tb:hyparam}
\centering
\begin{tabular}{{c}{c}}
    \hline
    hyper-parameter & Value \\
    \hline
    Optimizer & RMSProp \\
    Learning rate  & 5e-4 \\
    RMSProp alpha & 0.99\\
    RMSProp epsilon & 0.00001 \\
    Gradient norm clip & 10 \\
    \hline
    Batch size & 32 \\
    Replay buffer size &  5,000 \\
    \hline
    Exploration method  &  $\epsilon$-greedy \\
    $\epsilon$-start & 1.0 \\
    $\epsilon$-finish & 0.05 \\
    $\epsilon$-anneal time & 50,000 steps \\
    \hline
    $\gamma$ & 0.99 \\
    $\beta$ & 0.00001\\
    Evaluation interval & 10,000 \\
    Target update interval & 200 \\
    \hline
\end{tabular}
\end{table*}




\newpage
\section{Experiment Results} \label{appendix:exp_results}

We provide additional experiment results on $n$-step return and TD($\lambda$). As illustrated in Fig. \ref{fig:stag_hunter_n_steps_more} and \ref{fig:stag_hunter_lambdas_more}, QMIX can attain acceptable performance with some specific values of $n$ and $\lambda$ in Stag-Hunter Game. However, there is no convincing improvements of performance of VDN, QTRAN and IQL. On Quarry Game (Fig. \ref{fig:quarry_n_steps_more} and \ref{fig:quarry_lambdas_more}) and Afforestation Game (Fig. \ref{fig:afforest_n_steps_more} and \ref{fig:afforest_lambdas_more}), we can find that TD($\lambda$) cannot help to improve the performance of MARL methods. We can conclude that $n$-step and TD($\lambda$) have limited ability on improving the performance of MARL methods on OBMAS.


In addition to the empirical results of $n$-step and TD($\lambda$) returns, we present the results of MARL methods on Afforestation Game. In Fig. \ref{fig:afforest_results_1}, we can find that with~\OurMethod, all four methods get improved performance. We also compare QMIX-\OurMethod~and VDN-\OurMethod~with EMC, MAVEN, QPLEX, and RMIX. Despite the simple structure of VDN, VDN-\OurMethod~performs well and even outperforms QMIX-\OurMethod, demonstrating comparable performance with RMIX as depicted in Fig. \ref{baselines_1}. In Afforestation Game, agents will receive reward when they fail to return to the safe zone. The reward is proportional to the number of agents who fail to return to the safe zone. Such a clear and simple reward rule (\textit{i.e.}, a ``hint" for agents) makes learning much easier than that on Quarry and Stag-Hunter Game. This is the main reason why RMIX performs well. We can also find that MAVEN is also showing good performance due to its latent space learning model, which can efficiently learn the environment dynamics of Afforestation Game. QMIX-\OurMethod~also shows good performance and it outperforms EMC and QPLEX.

\begin{figure*}[ht]
    \centering
    \includegraphics[scale=0.375]{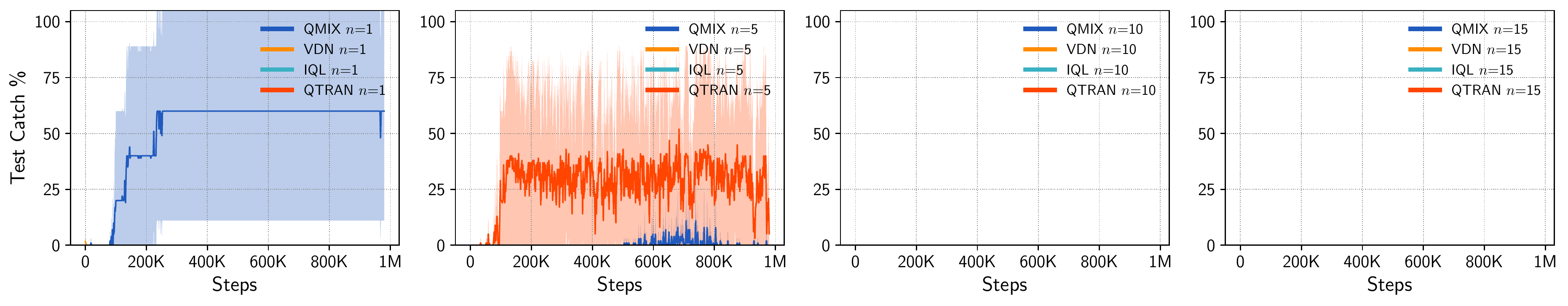}
    \caption{Results of $n$-step return Stag-Hunter Game.}
    \label{fig:stag_hunter_n_steps_more}
\end{figure*}

\begin{figure*}[ht]
    \vspace{-0.1cm}
    \centering
    \includegraphics[scale=0.375]{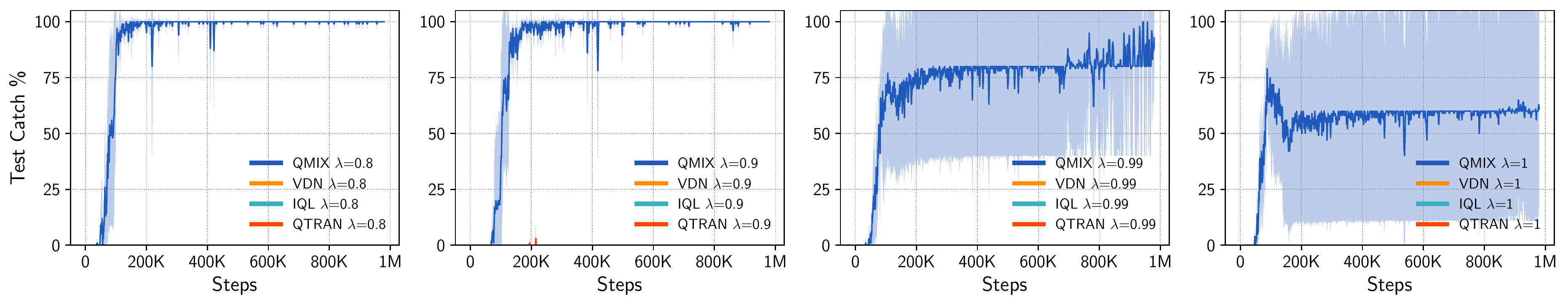}
    \caption{Results of TD($\lambda$) Stag-Hunter Game.}
    \label{fig:stag_hunter_lambdas_more}
\end{figure*}

\begin{figure*}[ht]
    \centering
    \includegraphics[scale=0.375]{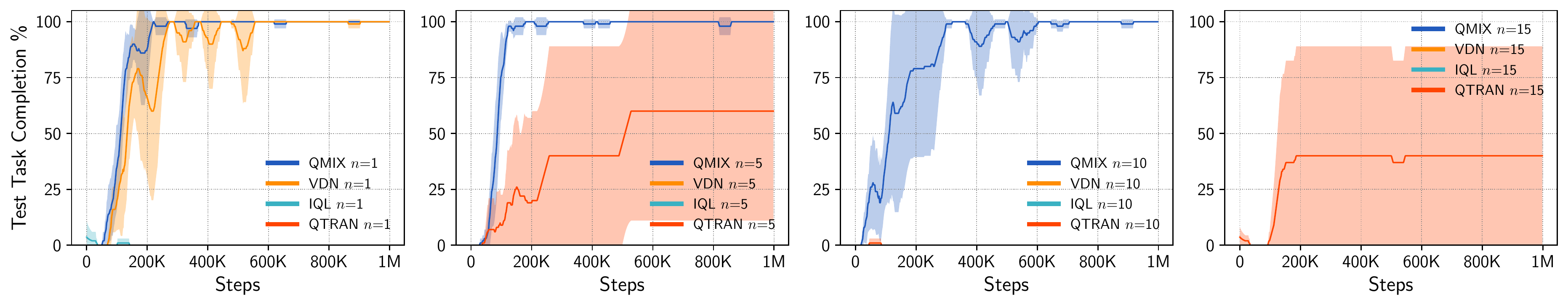}
    \caption{Results of $n$-step return Quarry Game.}
    \label{fig:quarry_n_steps_more}
\end{figure*}

\begin{figure*}[ht]
    \centering
    \includegraphics[scale=0.375]{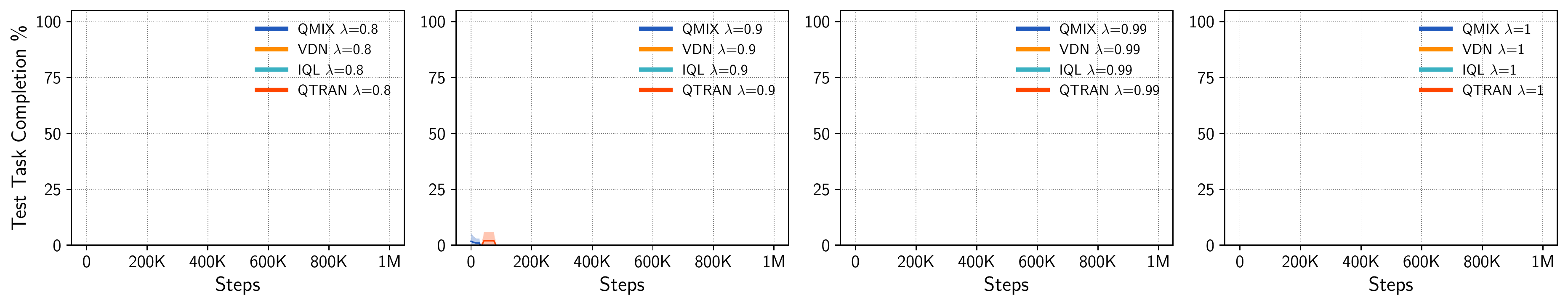}
    \caption{Results of TD($\lambda$) Quarry Game.}
    \label{fig:quarry_lambdas_more}
\end{figure*}

\begin{figure*}[ht]
    \centering
    \includegraphics[scale=0.375]{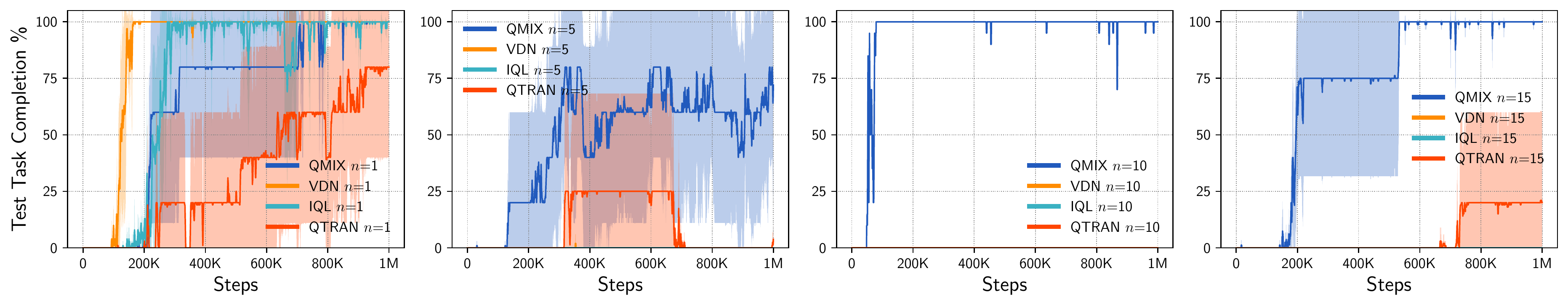}
    \caption{Results of $n$-step return Afforestation Game.}
    \label{fig:afforest_n_steps_more}
\end{figure*}

\begin{figure*}[ht]
    \centering
    \includegraphics[scale=0.375]{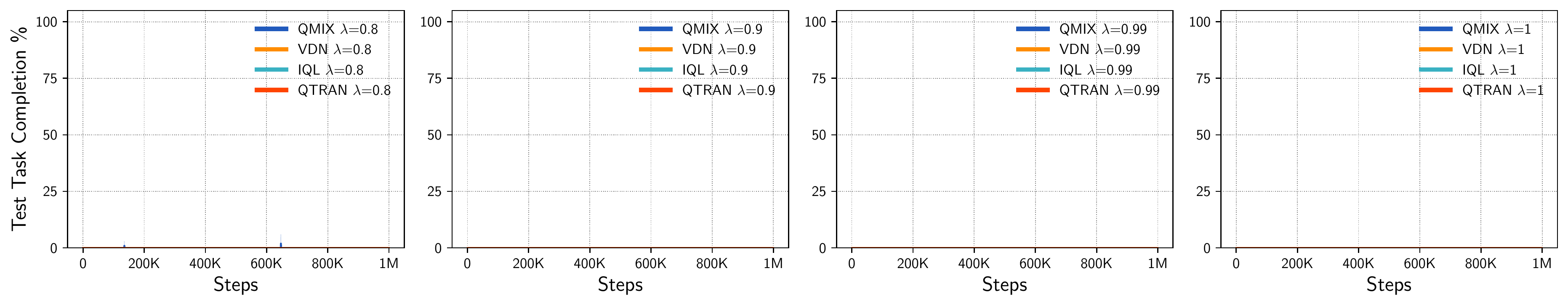}
    \caption{Results of TD($\lambda$) Afforestation Game.}
    \label{fig:afforest_lambdas_more}
\end{figure*}

\begin{figure*}[ht]
    \centering
    \includegraphics[scale=0.35]{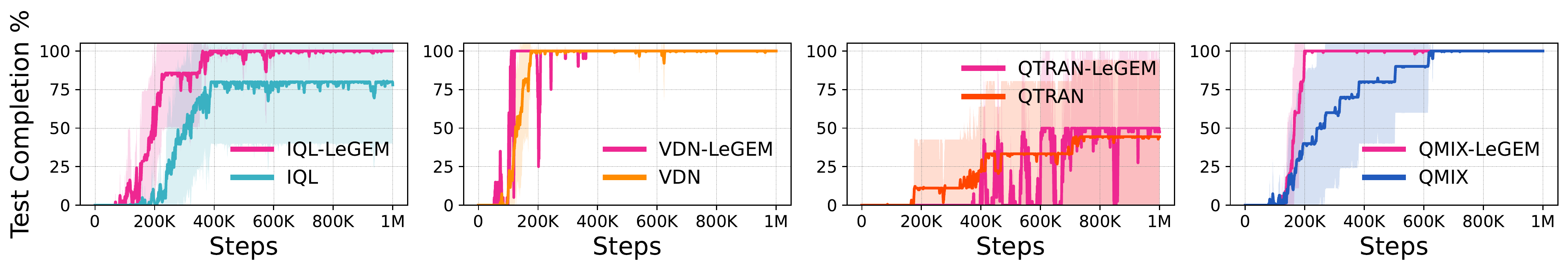}
    \caption{Results on Afforestation Game.}
    \label{fig:afforest_results_1}
\end{figure*}

\everypar{\looseness=-1}
\end{document}